\newif\ifreport

\reporttrue

\ifreport
\documentclass[12pt, draftclsnofoot, onecolumn]{IEEEtran}
\else
\documentclass[conference]{IEEEtran}
\fi

\usepackage{multirow}
\usepackage[Symbol]{upgreek}
\usepackage{booktabs}
\usepackage{threeparttable}
\usepackage{hyperref}
\usepackage{cite}
\usepackage{graphicx}
\usepackage{amsmath}
\usepackage{amssymb}
\usepackage{amsthm}
\usepackage{color}
\usepackage{algorithm}
\usepackage{algorithmic}
\theoremstyle{definition}
\newtheorem{theorem}{Theorem}
\newtheorem{lemma}{Lemma}

\newtheorem{corollary}{Corollary}

\newcommand{\tabincell}[2]{\begin{tabular}{@{}#1@{}}#2\end{tabular}}
\def\blu{\color{black}}
\def\blue{\color{black}}
\def\bluee{\color{black}}

\usepackage{setspace}

\IEEEoverridecommandlockouts

\begin{document}


\title{Analog Beam Tracking in Linear Antenna Arrays: Convergence, Optimality, and Performance}
\title{\huge Super Fast Beam Tracking in Phased Antenna Arrays}
\title{\huge \vspace{-4.5mm} Fast Beam Tracking in Phased Antenna Arrays: \\ Convergence, Optimality, and Performance}
\title{ Fast Analog Beam Tracking in Phased Antenna Arrays: Theory and Performance}

\author{Jiahui Li$^*$, Yin Sun$^\S$, Limin Xiao$^{\P\ddagger}$, Shidong Zhou$^*$, C. Emre Koksal$^\dagger$ \\
$^*$Dept. of EE, $^\P$Research Institute of Information Technology, Tsinghua University, Beijing, 100084, China\\
$^\S$Dept. of ECE, Auburn University, Auburn AL, 36849, U.S.A.\\
$^\dagger$Dept. of ECE, The Ohio State University, Columbus OH, 43210, U.S.A.
\thanks{$^\ddagger$Corresponding author.}
\thanks{J. Li, L. Xiao, and S. Zhou have been supported in part by National Basic Research Program of China (973 Program) grant 2012CB316002, National S\&T Major Project grant 2017ZX03001011-002, National Natural Science Foundation of China grant 61631013, National High Technology Research and Development Program of China (863 Program) grant 2014AA01A703, Science Fund for Creative Research Groups of NSFC grant 61321061, Tsinghua University Initiative Scientific Research grant 2016ZH02-3, International Science and Technology Cooperation Program grant 2014DFT10320, Tsinghua-Qualcomm Joint Research Program, and Huawei HIRP project. Y. Sun has been supported in part by ONR grant N00014-17-1-2417. C. Emre Koksal is supported in part by NSF grants CNS-1618566, CNS-1514260, and ONR grant N00014-16-1-2253. A part of this paper has been presented in the 2017 51st Asilomar Conference on Signals, Systems, and Computers \cite{li2017conf}.}
}

\maketitle

\begin{abstract}
The directionality of millimeter-wave (mmWave) communications introduces a significant challenge in serving fast-rotating/moving terminals, e.g., mobile AR/VR, high-speed vehicles, trains, UAVs.This challenge is exacerbated in mmWave systems using analog beamforming, because of the inherent non-convexity in the analog beam tracking problem. In this paper, we obtain the Cram\'er-Rao lower bound (CRLB) of beam tracking and optimize the analog beamforming vectors to get the minimum CRLB. Then, we develop a low complexity analog beam tracking algorithm that simultaneously optimizes the analog beamforming vector and the estimate of beam direction. Finally, by establishing a new basic theory, we provide the theoretical convergence analysis of the proposed analog beam tracking algorithm, which proves that the minimum CRLB of the MSE is achievable with high probability. Our simulations show that this algorithm can achieve faster tracking speed, higher tracking accuracy and higher data rate than several state-of-the-art algorithms. The key analytical tools used in our algorithm design are stochastic approximation and recursive estimation with a control parameter.


\end{abstract}

\bstctlcite{BSTcontrol}

\vspace{-5mm}
\section{Introduction}\label{sec_intro}
\vspace{-1mm}

In the upcoming 5th generation (5G) system, many high-throughput applications will be implemented to meet the increasing and diverse demands from users, such as mobile augmented/virtual reality (AR/VR)\cite{SunVR2018}, vehicle-to-vehicle/infrastructure (V2V/V2I) \cite{Wang2009V2V}, high-speed railway \cite{Wu2016Survey}, and unmanned aerial vehicle (UAV) \cite{Xiao2016Enabling}. Such explosively growing data traffic can be conveyed by using higher frequency bands, e.g., millimeter-wave (mmWave)\cite{Pi2011An, Boccardi2014Five, Heath2016overview}. Compared with traditional sub-6 GHz band, the propagation paths in the mmWave band are sparse, i.e., only the line-of-sight path and a few relatively strong reflected paths exist \cite{Rappaport2013Millimeter, Rappaport2015Wideband}. To ensure enough receive power, the antenna array with high array gain and narrow beam is needed \cite{Hang2010BeamAdapt}, and it is critical to track these paths fast and accurately in highly mobile conditions.

When using the large scale antenna array in the mmWave systems, it is not feasible to equip each antenna with a radio frequency (RF) chain and an A/D (or D/A) converter due to high energy consumption and huge hardware cost \cite{Heath2016overview}. Therefore, the hybrid (analog/digital) beamforming structure with phased antenna arrays were proposed as one of the most economical solutions \cite{Ohira2000Electronically, Sun2014Mimo, Han2015Large, Puglielli2016Design, Molisch2016Hybrid, Heath2016overview}. In this structure, the signals of all antennas are beamformed in the analog domain by using phase shifters, and one or several A/D (or D/A) converters are used for digital processing. This solution has been standardized by IEEE 802.11ad \cite{IEEE80211ad} and IEEE 802.15.3c \cite{IEEE802153c}, and is actively discussed by several 5G industrial organizations \cite{METIS2015, ITU2015}.

For each transmitter/receiver that is configured with a phased antenna array, {\bluee only a small amount of pilot symbols can be transmitted/received at a time}. This would lead to a high pilot training overhead for each terminal and thus less resource can be used for data transmission, which becomes even worse when the beam directions change rapidly. Hence, one fundamental challenge of mobile mmWave communication is how to accurately track the beam directions of enormous highly mobile terminals and achieve high data rate \& low pilot overhead, which has been recognized in the industry as one important research task for 5G mmWave and massive MIMO systems, e.g., \cite{Keysight2015massive, Samsung20155g, Amitava2016Enabling, Wen2016Bringing, Brown2016Promise}. This challenge is exacerbated in analog beamforming systems, where only one A/D (or D/A) converter is equipped. 

In this paper, we consider an analog beamforming system model and attempt to establish a basic theory of analog beam tracking, which has not been taken into consideration in the former studies. An efficient analog beam tracking algorithm is developed, which simultaneously achieves fast tracking speed, high tracking accuracy, high data rate, and low complexity. We prove that this algorithm converges to real beam directions with high probability, rather than other sub-optimal beam directions, at the maximum possible convergence rate for static scenarios. We also demonstrate that this translates into a highly accurate tracking of a large number of beams that are rotating at a high speed\footnote{These beams may come from either the same terminal or different terminals. For each terminal, the base station should keep track of several different beams to overcome the negative effect of channel blockage on some of the beam directions.}. Note that even with this simplified model, there still exist many difficulties to resolve the aforementioned challenge, which are summarized as follows:
\begin{itemize}
\item[1)] {\bluee Given a certain beam direction estimator, its estimation accuracy will be dependent on the pilot training beamforming vectors. Hence, the optimal beam tracking scheme should jointly optimize the analog beamforming vectors and the beam direction estimators.} 

\item[2)] Since only phase shifters are programmable for the analog phased antenna arrays, the optimization of analog beamforming vectors is a non-convex optimization problem.

\item[3)] Due to the multi-peak property of analog beam pattern, the problem of optimizing the beam direction estimators is non-convex, and there exist multiple locally optimal stable points.
\end{itemize} 
In addition, it is worth mentioning that the design concept and analytical framework proposed in this paper can potentially be generalized to more practical models, e.g., in one of our follow-up studies, a joint channel coefficient and beam direction tracking problem is solved \cite{Li2018mobilize}.

The main contributions and results of this paper are summarized as follows:

\begin{itemize}
\item We derive the CRLB of the beam tracking problem, which is a function of the analog beamforming vectors. Then, by optimizing among all analog beamforming vectors, we obtain the minimum CRLB. This minimum CRLB can help us find the optimal solution.

\item We develop an analog beam tracking algorithm for tracking dynamic beams, which jointly optimizes the analog beamforming vectors and beam direction estimators. The derived update equation of this algorithm is proportional to the imaginary part of the received signal, which makes the algorithm have very low complexity.




\item Theoretical convergence analysis of the proposed tracking algorithm is provided, showing that the minimum CRLB of the MSE is achievable under a certain condition. As the traditional stochastic approximation and recursive estimation theory (see \cite{nevel1973stochastic,kushner2003stochastic}) cannot be directly applied, we establish a new basic theory to prove the convergence and asymptotic optimality of our algorithm in \emph{static} beam tracking scenarios in three steps: First, we prove that it converges to a set of beam directions with probability one, including the real beam direction and some sub-optimal beam directions (Theorem \ref{th_convergence}). 
Second, we prove that under certain conditions, it converges to the real beam direction with high probability, rather than other sub-optimal beam directions (Theorem \ref{th_lock}). 
Finally, if the step-size parameters are chosen appropriately, then the mean squared error (MSE) of this algorithm converges to the minimum CRLB, and hence the highest convergence rate is achieved (Theorem \ref{th_normal}). To the extent of our knowledge, this paper presents the first theoretical analysis on the convergence and asymptotic optimality of the analog beam tracking problem.

\item Simulations in both \emph{static} and 
\emph{dynamic} beam tracking scenarios suggest that the proposed algorithm can achieve faster tracking speed, lower beam tracking error, and higher data rate than several state-of-the-art algorithms with the same pilot overhead, e.g., \cite{IEEE80211ad, KaramiLS2007, Gao2015multi, Alkhateeb2015Compressed, Rial2016Hybrid, Va2016tracking}. In particular, the numerical results in \emph{static} beam tracking scenarios reveal that this algorithm can converge quickly to the minimum CRLB, which verifies the theoretical performance limit we have derived and proved. In \emph{dynamic} beam tracking scenarios, at signal-to-noise-ratio (SNR) = 10 dB, the proposed algorithm is capable of tracking 18.33$^\circ$/s and achieve 95\% capacity by inserting only 5 pilots per sec (see Table \ref{tab_comp} in Section \ref{sec_dynamic_beam_tracking} for details), which corresponds to very fast tracking speed. In addition, this algorithm can support much lower SNR (e.g., -5 dB) and outperform the comparison algorithms a lot.
\end{itemize}

The rest of this paper is organized as follows. In Sections \ref{sec_relate_work}, we introduce the related work. In Sections \ref{sec_model}, the system model is described. In Sections \ref{sec_problem}, we formulate the beam tracking problem and obtain its performance bound. In Section \ref{sec_algorithm} and \ref{sec_analysis}, a recursive beam tracking algorithm is designed, which is proven to converge to the minimum CRLB in \emph{static} beam tracking scenarios. In Section \ref{sec_simulation}, numerical results show that this algorithm converges very fast to the minimum CRLB in the \emph{static} beam tracking scenarios and achieves a better tradeoff curve between MSE (or data rate) vs. angular velocity in \emph{dynamic} beam tracking scenarios.

\vspace{-4mm}
\section{Related Work}\label{sec_relate_work}
\vspace{-2.5mm}
\subsection{Beam Estimation and Tracking}\label{sec_related}
\vspace{-1.5mm}
There have been a large number of studies on beam direction estimation/tracking in mmWave systems with analog beamforming arrays. We first review the state-of-the-art algorithms:

\subsubsection{Beam estimation}This kind of methods sweep the channel with predefined spatial beams and estimate the beam directions of the channel based on these observations. According to different sweeping methods, we divide them into three categories: 1) Exhaustive sweeping \cite{Lee2014Exploiting, Payami2015Effective, zhu2016auxiliary}: Narrow spatial beams are used to probe the channel exhaustively. It can guarantee a thorough observation on the channel, but the pilot training overhead increases linearly with the number of antennas, which will easily go beyond the limitation of transmission resource. 2) Hierarchical multi-resolution sweeping \cite{Wang2009Beam, Hur2013Millimeter, Alkhateeb2014Channel, Alkhateeb2015Limited, Xiao2016Enabling}: The hierarchical multi-resolution codebooks are used to sweep the channel. 3) Random sweeping \cite{Gao2015multi, Alkhateeb2015Compressed, Rial2016Hybrid}: Several random analog beamforming vectors are used to observe the channel. Compared with the exhaustive sweeping methods, the latter two categories can reduce the pilot overhead a lot. However, the probing beams in these schemes might be far from the real beam directions, making the corresponding observation SNRs too low to contribute to the beam direction estimation. 
    
\subsubsection{Beam tracking} This kind of methods take the prior information of beam directions into account.
In \cite{zhang2016tracking, palacios2016tracking, Gao2016Fast, Va2016tracking}, the estimated beam directions are updated based on the latest estimates. In particular, the algorithms in \cite{zhang2016tracking, palacios2016tracking} use predefined probing directions to exhaustively sweep the channel in each iteration. Each time when the sweeping process is finished, the beam direction estimates will be updated recursively by using the latest observations and estimates. Due to the use of exhaustive sweeping in these algorithms, a large number of beam directions should be probed in each iteration, which introduces high pilot overhead. To avoid high pilot overhead, \cite{Gao2016Fast} proposed to probe several directions around the newly estimated beam directions, and \cite{Va2016tracking} used two pilots to probe two different directions within the mainlobe in each iteration. But the optimal probing directions are not given in these algorithms. In \cite{bae2017new}, the authors start to study the optimization of analog beamforming vectors during pilot training, which is obtained based on the latest estimate. However, its beam direction estimation is done without using the historical estimation information. 
In these studies, they did not simultaneously optimize the pilot training beamforming vectors and design the beam tracking scheme.
In addition, these works mainly used simulation results to verify the tracking performance without theoretical analysis on the convergence performance. However, as the beam tracking problem is non-convex, the convergence performance analysis is quite important.

\vspace{-3.5mm}
\subsection{Theoretical Analysis with Multiple Stable Points}\label{sec_related_theory}
\vspace{-0.5mm}

In \cite{koksal2012robust}, the authors analyzed the theoretical performance of their recursive SNR tracking algorithm. It proved that the algorithm can converge to the optimal solution asymptotically and approach the optimal Cram\'er-Rao lower bound (CRLB). In \cite{koksal2012robust}, the stochastic approximation and recursive estimation theory given in \cite{nevel1973stochastic} was used. However, this theory is mainly designed for the problem with a single stable point, which cannot handle the multi-stable-points beam tracking problem directly. In \cite{kushner2003stochastic}, the authors established a new theory that takes multiple stable points into consideration. A theorem was given to prove that a recursive process can converge to a unique point within the stable point set. But it still cannot handle the beam tracking problem, in which the algorithm should converge to the real beam direction, rather than other local optimal stable points. Because among all the stable points, only the real beam direction can provide an acceptable received SNR. To analyze whether a recursive process can converge to one particular stable point or not, \cite{borkar2008stochastic} has proposed a new theory, in which the lower bound of probability of convergence was derived. 

\vspace{-2mm}
\section{Model Description}\label{sec_model}
\vspace{-1mm}

\subsection{Notations} 
\vspace{-0.5mm}
Lower case letters such as $a$ and $\mathbf{a}$ is used to represent scalars and column vectors, respectively, where $|a|$ denotes the modulus of $a$ and $\|\mathbf{a}\|_2$ denotes the 2-norm of $\mathbf{a}$.  Upper case letters such as $\mathbf{A}$ will be utilized to denote matrices. For a vector $\mathbf{a}$ or a matrix $\mathbf{A}$, its transpose is denoted by $\mathbf{a}^\text{T}$ or $\mathbf{A}^\text{T}$, and its Hermitian transpose is denoted by $\mathbf{a}^\text{H}$ or $\mathbf{A}^\text{H}$. Let $\mathcal{CN}(u,\sigma^2)$ stand for the circular symmetric complex Gaussian distribution with mean $u$ and variance $\sigma^2$, and $\mathcal{N}(u,\sigma^2)$ stand for the real Gaussian distribution with mean $u$ and variance $\sigma^2$. The sets of (positive) integers and real numbers are written as $\mathbb{Z}(\mathbb{Z}^+)$ and $\mathbb{R}$, respectively. Expectation is denoted by $\mathbb{E}[\cdot]$ and the real (imaginary) part of a variable $x$ is denoted by $\operatorname{Re}\left\{ x \right\}$ $\left(\operatorname{Im}\left\{ x \right\}\right)$. The natural logarithm of $x$ is denoted by $\log(x)$. The phase of a complex number $z$ is obtained by $\angle z$.

\vspace{-2mm}
\subsection{System Model}
\vspace{-0.5mm}

Consider a receiver with a linear antenna array in Fig. \ref{fig_system}, where $M$ antennas are placed along a line, with a distance $d$ between adjacent antennas.\footnote{In practical systems, both transmitter and receiver sides should be considered. Due to the transmitter-receiver reciprocity, the beam tracking of both sides will have similar designs. Hence, we consider beam tracking algorithm design and performance analysis on the receiver side. In the following sections, in order to make the mathematical expressions and derivations more concise, the transmitter is simplified by using an omnidirectional antenna model.} The antennas are connected through programmable phase shifters to a single RF chain, and the phase shifters are controlled to steer the beam. In the mmWave band, the multi-path channel model is widely used \cite{Heath2016overview}. In addition, as block fading is often assumed in most literature, the channel vector is given by
\vspace{-1.5mm}
\begin{equation}\mathbf{h}_n = \sum_{\ell=1}^{N_\text{p}} \beta_{n,\ell} \mathbf{a}(x_{n,\ell}),\vspace{-1.5mm}\end{equation}
where $n$ is the time-slot number, $N_\text{p}$ is the number of distinct propagation paths, $\beta_{n,\ell}$ is the complex channel coefficient of the $\ell$-th path,	\vspace{-2mm}
\begin{equation}\label{eq_steer}
	\mathbf{a}(x_{n,\ell}) = \left[ 1~e^{j \frac{2\pi d}{\lambda} x_{n,\ell}}~\cdots~e^{j \frac{2\pi d}{\lambda}(M-1)x_{n,\ell}} \right]^\text{H},
	\vspace{-1mm}
\end{equation}
is the steering vector, $x_{n,\ell} = \sin(\theta_{n,\ell})$ is the normalized spatial frequency, $\theta_{n,\ell}\in[-\pi/2,\pi/2]$ is the angle-of-arrival (AoA), and $\lambda$ is the wavelength.

\begin{figure}
\centering
\vspace{-0mm}
\includegraphics[width=5.5cm]{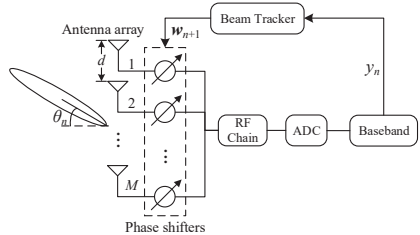}
\vspace{-2mm}
\caption{System model.}
\label{fig_system}
\vspace{-5mm}
\end{figure}

Thanks to the sparsity of mmWave channel and the use of large scale antenna array, the interference between different paths is weak in most cases \cite{Rappaport2013Millimeter, Rappaport2015Wideband}. Motivated by this, we assume that the single-path beam tracking is considered and multiple propagation paths can be tracked separately by using single-path beam tracking algorithms. This assumption was also adopted in recent studies, e.g., \cite{Va2016tracking}. Meanwhile, the constant-envelope constraint in phased antenna arrays (i.e., only phase shifters can be controlled in the analog beamforming vectors) makes beam tracking design much harder than the conventional fully digital arrays, which in turn renders the evaluation of fundamental limits and the development of achievable schemes extremely difficult. To that end, for analytical tractability in this paper, we will focus on the single-path case, in which one propagation path should be tracked.

For clarity, we omit the path number $\ell$. As depicted in Fig. \ref{fig_system}, the propagation path has an AoA $\theta_n$, and its channel vector is denoted by $\mathbf{h}_n = \beta_n \mathbf{a}(x_n)$. We assume that $\beta_n$ changes much slower than the beam direction $x_n$, hence it can be treated as a fixed and known parameter in the latter part, i.e., $\beta_n = \beta$.\footnote{ Regarding $\beta_n$, there exist two cases: (i) $\beta_n$ is assumed to known, and (ii) $\beta_n$ is unknown. In the first case, we can focus on the theoretical analysis of beam direction tracking, which will be considered in this paper. The second case is taken into consideration in our follow up paper \cite{Li2018mobilize}.}
Let $w_{mn} \in[-\pi,\pi]$ be the phase shift in radians provided by the $m$-th phase shifter. Then, the analog beamforming vector steered by the phase shifters is \vspace{-1mm}
\begin{equation}\label{eq_bf}
	\mathbf{w}_n = \frac{1}{\sqrt{M}}\left[ e^{jw_{1n}}~ e^{jw_{2n}}~ \cdots~e^{jw_{Mn}} \right]^\text{H}.
	\vspace{-1mm}
\end{equation}
Combining the output signals of the phase shifters yields \vspace{-1mm}
\begin{equation}\label{eq_receive}
	r_n =   \mathbf{w}_n^\text{H} \mathbf{h}_n p + \sigma {z}_n = p \beta\mathbf{w}_n^\text{H}\mathbf{a}(x_n) + \sigma {z}_n,
	\vspace{-1mm}
\end{equation}
where $p$ is the pilot signal, $\sigma^2$ is the noise power at each antenna, and the ${z}_n$'s are \emph{i.i.d.} circularly symmetric complex Gaussian random variables with zero mean and unity variance. Dividing $r_n$ in \eqref{eq_receive} by $p\beta$, the observation that contains the beam direction information is obtained as\vspace{-1mm}
\begin{equation}\label{eq_observation}
	y_n = \mathbf{w}_n^\text{H}\mathbf{a}(x_n) + \frac{{z}_n}{\sqrt{\rho}},\vspace{-1mm}
\end{equation}
where $\rho = {|p\beta|^2}/{\sigma^2}$ is the SNR at each antenna. Given $x_n$ and $\mathbf{w}_n$, the conditional probability density function of $y_n$ is\vspace{-1mm}
\begin{equation}\label{eq_pdf}
	p(y_n| x_n, \mathbf{w}_n) = \frac{\rho}{\pi} e^{- \rho\left| y_n -  \mathbf{w}_n^\text{H}\mathbf{a}(x_n) \right|^2}.\vspace{-1mm}
\end{equation}

A beam tracker determines the analog beamforming vector $\mathbf{w}_{n}$ and provides an estimate $\hat{x}_n$ of the spatial frequency $x_n$ after applying $\mathbf{w}_{n}$.\footnote{Interestingly, by tracking the spatial frequency $x_n$, we obtain a beam tracking algorithm with better robustness than tracking the AoA $\theta_n$; see Section \ref{further_discussion} for details.} From a control system perspective, $x_n$ is the system state, $\hat{x}_n$ is its estimate, the beamforming vector $\mathbf{w}_{n}$ is the control action, and $y_n$ is a noisy observation that is determined by a non-linear and non-convex function of the system state $x_n$ and control action $\mathbf{w}_{n}$. In the next section, we will formulate this beam tracking problem.






\vspace{-1mm}
\section{Beam Tracking and its Performance Bound}
\vspace{-1mm}
\label{sec_problem}


In Section \ref{sec_formulation}, we first formulate the beam tracking problem. Then, in Section \ref{sec_bound}, we derive a performance bound for the beam tracking problem.

\vspace{-2.5mm}
\subsection{Problem Formulation}\label{sec_formulation}
\vspace{-0.5mm}

{\blue Let $\psi\!=\!(\mathbf{w}_1, \mathbf{w}_2, \ldots, \hat{x}_1, \hat{x}_2, \ldots)$ represent a beam tracking policy, which contains a sequence of beamforming vectors $\{\mathbf{w}_n, n \in \mathbb{Z}^+\}$ and beam direction estimators $\{\hat{x}_n, n \in \mathbb{Z}^+\}$. In particular, we consider the set $\Psi$ of \emph{causal} beam tracking policies: The estimate $\hat{x}_n$ of time-slot $n$ and the control action $\mathbf{w}_{n+1}$ of time-slot $n+1$ are determined  by using the history of the control actions $(\mathbf{w}_1, \ldots, \mathbf{w}_n)$ and the observations $(y_1, \ldots, y_n)$. The policy $\psi$ is to be designed to minimize the beam tracking error in each time-slot. 
}
Given any time-slot $n$, the beam tracking problem can be formulated as \vspace{-1mm}
\begin{equation}\label{eq_problem}\vspace{-5mm}
	\blue \!\!\!\!\!\!\!\!\!\!\!\!\!\!\!\!\!\!\!\!\!\!\!\!\!\!\!\!\!\!\!\!\!\!\!\!\!\!\!\!\!\!\!\!\!\!\!\!\!\!\!\!\!\!\!\!\!\!\!\!\!\!\!\!\!\!\!\!\!\!\!\!\!\!\!\!\!\!\!\!\!\!\!\!\!\!\!\!\min_{\psi \in \Psi}~ \blue \mathbb{E}\left[ \left( \hat{x}_{n} - x_n \right)^2 \right]
	\end{equation}
	\begin{equation}\vspace{-1mm} \label{eq_constrant}
	\blue ~\text{s.t.}~ \blue \mathbb{E}\left[ \hat{x}_{n} \right] = x_n,  \mathbf{w}_n = \frac{1}{\sqrt{M}}\left[ e^{jw_{1n}}~ \cdots~ e^{jw_{Mn}} \right]^\text{H},
\end{equation}
where $\hat{x}_{n}$ in constraint \eqref{eq_constrant} is an \emph{unbiased} estimator of $x_n$.
Problem \eqref{eq_problem} is a constrained sequential control and estimation problem that is difficult, if not impossible, to solve optimally. First, the system is partially observed through the observation $y_n$. Second, both the control action $\mathbf{w}_{n}$ and the estimator $\hat{x}_n$ need to be optimized in Problem \eqref{eq_problem}: On the one hand, because only the phase shifts $(w_{1n}, \ldots, w_{Mn})$ in \eqref{eq_bf} are controllable, the optimization of $\mathbf{w}_{n}$ is a non-convex optimization problem. {\blue On the other hand, as the probability density function of $y_n$ is determined by the non-linear and non-convex function of $x_n$ as given in \eqref{eq_pdf}, the problem of optimizing the estimator $\hat{x}_n$ is also non-convex.} 

\vspace{-3mm}
\subsection{Lower Bound of Beam Tracking Error}\label{sec_bound}
\vspace{-1mm}
Next, we establish a lower bound of the MSE\footnote{\blue Note that as the unbiased estimator is considered, the MSE is equal to the variance of the proposed estimator. Therefore, we use MSE throughout the paper. In addition, the MSE is also a suitable performance metric for any estimators, including the biased and unbiased ones.} in \eqref{eq_problem} under the \emph{static} beam tracking scenarios, where $x_n=x$ for all time-slot $n$. Given the control actions $(\mathbf{w}_1, \ldots, \mathbf{w}_n)$, the MSE is lower bounded by the CRLB \cite{nevel1973stochastic}\vspace{-1mm}
\begin{equation}\label{eq_MMSE}
\mathbb{E}\left[ \left( \hat{x}_{n} - x \right)^2 \right] \ge \frac{1}{\sum_{i=1}^n I(x, \mathbf{w}_i)},\vspace{-1mm}
\end{equation}
where $I(x, \mathbf{w}_i) $ is the Fisher information \cite{Poor1994estimation} that can be computed by using \eqref{eq_pdf}:\vspace{-1mm}
\begin{equation}
\begin{aligned} 
	 I(x, \mathbf{w}_i) &= \mathbb{E}\left[ \left. - \frac{\partial^2 \log p \left( y_i|x, \mathbf{w}_i \right)}{\partial x^2} \right| x, \mathbf{w}_i \right] 
	= \frac{2\rho}{M}\left| \sum\limits_{m=1}^{M} \frac{2\pi d}{\lambda}(m-1) e^{j\left[w_{mi} - \frac{2\pi d}{\lambda}(m-1)x\right]} \right|^2.
\end{aligned}\vspace{-1mm}
\end{equation}
Note that the Fisher information $I(x, \mathbf{w}_i) $ is the function of $\mathbf{w}_i$. By optimizing the control actions $(\mathbf{w}_1, \ldots, \mathbf{w}_n)$ in the right-hand-side (RHS) of \eqref{eq_MMSE}, we obtain\vspace{-1mm}
\begin{equation}\label{eq_opt_MMSE}
\frac{1}{n} \sum_{i=1}^n I(x, \mathbf{w}_i) \le \frac{2M(M-1)^2\pi^2 d^2 \rho}{\lambda^2} \overset{\Delta}{=} I_{\max},\vspace{-1mm}
\end{equation}
where the maximum Fisher information $I_{\max}$ in (\ref{eq_opt_MMSE}) is achieved if, and only if, for $i = 1, \ldots, n$\vspace{-1mm}
\begin{equation}\label{eq_ctrl}
	\mathbf{w}_i = \frac{\mathbf{a}(x)}{\sqrt{M}} = \frac{1}{\sqrt{M}}\left[ 1~ e^{j \frac{2\pi d}{\lambda} x}~ \cdots~ e^{j \frac{2\pi d}{\lambda}(M-1)x} \right]^\text{H}.\vspace{-1mm}
\end{equation}
Hence, the MSE is lower bounded by the minimum CRLB\vspace{-1mm}
\begin{equation}\label{eq_CRLB}
	\mathbb{E}\left[ \left( \hat{x}_{n} - x \right)^2\right] \ge \frac{1}{nI_{\max}}.\vspace{-1mm}
\end{equation}

In what follows, we will investigate a new recursive analog beam tracking algorithm that can achieve this lower bound. 

\vspace{-2mm}
\section{Recursive Analog Beam Tracking Algorithm}\label{sec_algorithm}
\vspace{-0.5mm}
In this section, we will introduce our new recursive analog beam tracking algorithm, which tells how to recursively update the beamforming vectors and the estimates according to current observations and historical estimates. 

\vspace{-2.5mm}
\subsection{Frame Structure of Beam Tracking}\label{sec_alg_description}
\vspace{-0.5mm}

We first introduce the frame structure of the transmitted signals. The transmission is divided into two stages: 1) coarse beam sweeping and 2) recursive beam tracking. As depicted in Fig. \ref{fig_frame}, $M$ pilots will be received successively in \emph{Stage 1}, which is assumed to obtain an initial estimate $\hat{x}_0$. In \emph{Stage 2}, one pilot is allocated in each time-slot (e.g., at the beginning of each time-slot as in Fig. \ref{fig_frame}), and the estimate $\hat{x}_n$ as well as the control action $\mathbf{w}_n$ are updated iteratively.

\begin{figure}
\centering
\vspace{-2mm}
\includegraphics[width=5.8cm]{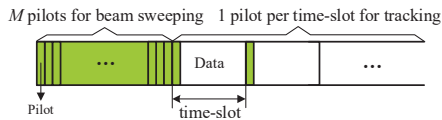}
\vspace{-2mm}
\caption{Frame structure.}
\vspace{-1mm}
\label{fig_frame}
\end{figure}

\begin{algorithm}[t]
\caption{Recursive Analog Beam Tracking}
\label{alg_1}
\begin{algorithmic}
\STATE {\begin{itemize}
\item[1)] \textbf{Coarse Beam Sweeping:} 
Receive $M$ pilots successively. The analog beamforming vector $\tilde{\mathbf{w}}_{m}$ for receiving the $m$-th training signal $\tilde{y}_m$ is\vspace{-1mm}
\begin{equation}\label{eq_codebook}
\tilde{\mathbf{w}}_{m} = \frac{\mathbf{a}\left(\frac{2m}{M} - \frac{M+1}{M}\right)}{\sqrt{M}}, m = 1, \ldots, M.\vspace{-1mm}
\end{equation}
Obtain the initial estimate  $\hat{x}_0$ of the beam direction by\vspace{-1.5mm} {\blue
\begin{equation}\label{eq_initial}
\begin{aligned}
	\hat{x}_0 =\underset{\hat{x} \in \mathcal{X}}{\arg\max}~\left|\mathbf{a}(\hat{x})^\text{H}\tilde{\mathbf{W}}\tilde{\mathbf{y}}\right|,
\end{aligned}\vspace{-1.5mm}
\end{equation}
where $\tilde{\mathbf{y}} = [\tilde{y}_1~ \cdots~ \tilde{y}_M]^\text{T}$, $\tilde{\mathbf{W}} = [\tilde{\mathbf{w}}_1~ \cdots~ \tilde{\mathbf{w}}_M]$}, and $\mathcal{X} = \left\{\frac{1 - M_0}{M_0}, \frac{3 - M_0}{M_0}, \ldots, \frac{M_0-1}{M_0}\right\}$.
\vspace{2mm}
\item[2)] \textbf{Recursive Beam Tracking:} In each time-slot $n = 1, 2, \ldots $, the analog beamforming vector $\mathbf{w}_{n}$ is\vspace{-1mm}
\begin{equation}\label{eq_est_ctrl}
	\begin{aligned}
	\mathbf{w}_{n} = \frac{\mathbf{a}\left(\hat{x}_{n-1}\right)}{\sqrt{M}}. 	
	\end{aligned}\vspace{-1mm}
\end{equation}
The estimate $\hat{x}_{n}$ of the beam direction is updated by\vspace{-1mm}
\begin{equation}\label{eq_est}
\begin{aligned}
\hat{x}_{n} = \left[ \hat{x}_{n-1} - a_n \operatorname{Im}\left\{ y_{n} \right\} \right]_{-1}^1,
\end{aligned}\vspace{-1mm}
\end{equation}
where $[x]_{b}^{c} = \max\left\{ \min\{ x, c \}, b \right\}$ and $a_n > 0$ is the step-size that will be specified later.
\end{itemize}
}
\end{algorithmic}
\end{algorithm}

\vspace{-2.5mm}
\subsection{Algorithm Design}\label{sec_ration}
\vspace{-0.5mm}

Based on the frame structure introduced in Section \ref{sec_alg_description}, we design a recursive analog beam tracking algorithm as described in Algorithm \ref{alg_1}. Then, we will clarify how Algorithm \ref{alg_1} is designed. Due to the non-convex property of the beam tracking problem in \eqref{eq_problem}, a good initial estimate $\hat{x}_0$ obtained in \emph{Stage 1} is quite important for the success of tracking the real direction $x_n$ in \emph{Stage 2}. As depicted in Fig. \ref{fig_imagpart}, we can conjecture that a good initial estimate $\hat{x}_0$ should be within the mainlobe set $\mathcal{B}\left(x_0\right)$, defined by\vspace{-1mm}
\begin{equation}\label{eq_mainlobe}
	\mathcal{B}\left(x_0\right) =\left( x_0 - \frac{\lambda}{Md}, x_0 +  \frac{\lambda}{Md}\right) \bigcap [-1, 1].\vspace{-1mm}
\end{equation}
To achieve this goal, the exhaustive sweeping is used to thoroughly probe the channel (i.e., by using the beamforming vectors in \eqref{eq_codebook}), and {\blue  motivated by the orthogonal matching pursuit method (e.g., \cite{Alkhateeb2015Compressed}), the initial estimate $\hat{x}_0$ is obtained by projecting the observation vector on a redundant dictionary $\mathcal{X}$ in \eqref{eq_initial}, where the size $M_0$ of $\mathcal{X}$ determines the estimation resolution and a larger size $M_0$ provides a more accurate initial estimate.\footnote{\blue The observation vector can be denoted by $\tilde{\mathbf{y}} = \tilde{\mathbf{W}}^\text{H}\mathbf{a}(x_0) + \tilde{\mathbf{z}}$, where $\tilde{\mathbf{z}}$ is the observation noise vector. Since $\tilde{\mathbf{W}}$ is a unitary matrix, i.e., $ \tilde{\mathbf{W}}\tilde{\mathbf{W}}^\text{H} = \mathbf{I}_M$, we can get $\left|\mathbf{a}(\hat{x})^\text{H}\tilde{\mathbf{W}}\tilde{\mathbf{y}}\right| = \left|\mathbf{a}(\hat{x})^\text{H}\mathbf{a}(x_0) + \mathbf{a}(\hat{x})^\text{H}\tilde{\mathbf{W}}\tilde{\mathbf{z}}\right|$, which is maximized by choosing an $\hat{x}$ that is close to $x_0$. When $\tilde{\mathbf{z}} = \mathbf{0}$, the best choice is $\hat{x} = x_0$.}} Our simulations suggest that, if the SNR $\rho \geq 0$ dB and $M_0 = 2M$, a good initial estimate $\hat{x}_0$ within the mainlobe $\mathcal{B}\left(x_0\right)$ can be obtained with a probability higher than $99.99\%$.\footnote{We can use more time-slots (pilot resources) to support lower SNR in \emph{Stage 1}. As \emph{Stage 1} is executed only once, this will not increase the total pilot overhead by much.}



In \emph{Stage} 2, the recursive beam tracker in \eqref{eq_est} is motivated by the following maximum likelihood (ML) estimator:\vspace{-1mm}
\begin{equation}\label{eq_ML_estimator}
\underset{\hat{x}_n}{\max}\left\{\underset{\mathbf{w}_n}{\max}~\sum_{i=1}^n\mathbb{E}\bigg[\log p\left( y_i|\hat{x}_n,\!\mathbf{w}_i\!\right)\bigg| \begin{matrix} \hat{x}_n,\!\mathbf{w}_1,\!\ldots,\!\mathbf{w}_i, \\ \!y_1,\!\ldots,\!y_{i-1}\end{matrix}\!\bigg]\right\}\!,
\vspace{-1mm}\end{equation}
where $\hat{x}_n \in [-1, 1]$ and $\mathbf{w}_n$ is subject to \eqref{eq_bf}. We propose a two-layer nested optimization algorithm to solve \eqref{eq_ML_estimator}:

In the inner layer, to achieve the maximum value, it is equivalent to maximize the Fisher information to find the best control action $\mathbf{w}_n$ as follows:\vspace{-1mm}
\begin{equation}\label{eq_maxFI}\begin{aligned}
\underset{\mathbf{w}_n}{\max}~~&I(\hat{x}_{n-1}, \mathbf{w}_n) \\
\text{s.t.}~~&{\blue \mathbf{w}_n = \frac{1}{\sqrt{M}}\left[ e^{jw_{1n}}~ \cdots~ e^{jw_{Mn}} \right]^\text{H}}.
\end{aligned}\vspace{-1mm}\end{equation}
According to \eqref{eq_opt_MMSE}, the solution of \eqref{eq_maxFI} is given by  $\mathbf{w}_n = {\mathbf{a}(\hat{x}_{n-1})}/{\sqrt{M}}$, i.e., \eqref{eq_est_ctrl}.

In the outer layer, rather than directly solving \eqref{eq_ML_estimator}, we propose to use the stochastic Newton's method, given by \cite{nevel1973stochastic}
\begin{equation}\begin{aligned}\label{eq_newton}\hat{x}_{n} = &~\left[ \hat{x}_{n-1} - s_{n} \cdot \frac{\frac{\partial \log p\left( y_n|\hat{x}_{n-1}, \mathbf{w}_n \right)}{\partial \hat{x}_{n-1}}}{\mathbb{E}\left[\left.\frac{\partial^2 \log p\left( y_n|\hat{x}_{n-1}, \mathbf{w}_n \right)}{\partial \hat{x}_{n-1}^2}\right| \hat{x}_{n-1}, \mathbf{w}_n \right]} \right]_{-1}^1 \\
 = &~\left[ \hat{x}_{n-1} + s_{n} \cdot \frac{\frac{\partial \log p\left( y_n|\hat{x}_{n-1}, \mathbf{w}_n \right)}{\partial \hat{x}_{n-1}}}{I(\hat{x}_{n-1}, \mathbf{w}_n)} \right]_{-1}^1,
\end{aligned}\end{equation}
where $s_n$ is the step-size, $[x]_{-1}^{1}\!=\!\max\left\{ \min\{ x, 1 \}, -1 \right\}$ constrains the estimates within the feasible region $[-1, 1]$, \vspace{-1mm}
\begin{equation}\label{eq_newton_1}
\begin{aligned}
\frac{\partial \log p\left( y_n|\hat{x}_{n-1}, \mathbf{w}_n \right)}{\partial \hat{x}_{n-1}}
= 2 \rho\operatorname{Re}\left\{ \left[y_n -  \mathbf{w}_n^\text{H}\mathbf{a}(\hat{x}_{n-1})\right]^\text{H} \cdot \mathbf{w}_n^\text{H}\frac{\partial \mathbf{a}(\hat{x}_{n-1})}{\partial \hat{x}_{n-1}}  \right\},
\end{aligned}\vspace{-1mm}
\end{equation}
and \vspace{-1mm}
\begin{equation} \label{eq_newton_2}\begin{aligned}
I(\hat{x}_{n-1}, \mathbf{w}_n) 
= \frac{2\rho}{M}\left| \sum\limits_{m=1}^{M} \frac{2\pi d}{\lambda}(m\!-\!1) e^{j\left[w_{mn}\!-\!\frac{2\pi d}{\lambda}(m\!-\!1)\hat{x}_{n-1}\right]} \right|^2.
\end{aligned}\vspace{-1mm}\end{equation}
By plugging \eqref{eq_est_ctrl}, \eqref{eq_newton_1} and \eqref{eq_newton_2} into \eqref{eq_newton}, we can obtain the following recursive beam tracker (see Appendix \ref{app_recursive_beam_tracker} for the detailed derivation)\vspace{-1.5mm}
\begin{equation}\label{eq_recursive_new}
\begin{aligned}
\hat{x}_{n} = \left[ \hat{x}_{n-1} -  \frac{\lambda s_n}{\sqrt{M}(M-1)\pi d}\cdot \operatorname{Im}\left\{ y_{n} \right\} \right]_{-1}^1.
\end{aligned}\vspace{-1.5mm}
\end{equation}
Let $a_n = {\lambda s_n}/[{\sqrt{M}(M-1)\pi d}]$ in \eqref{eq_recursive_new} be the new step-size, then we can obtain \eqref{eq_est}. Hence, even though the original algorithm in \eqref{eq_newton} is quite complicated, we are able to simplify it significantly, which greatly reduces the computational complexity of the algorithm.
\vspace{-2mm}
\section{Asymptotic Optimality Analysis}\label{sec_analysis}\vspace{-1mm}

In this section, we first present the key challenge faced by Algorithm \ref{alg_1}. Then, a series of three theorems will be developed to prove its asymptotic optimality in \emph{static} beam tracking, which helps resolve this challenge. Finally, we will investigate an alternative scheme that can be used to perform beam tracking.

\vspace{-3mm}
\subsection{Multiple Stable Points for Recursive Procedure} \label{sec_stable}\vspace{-1mm}
To obtain the points that the recursive procedure \eqref{eq_est_ctrl} and (\ref{eq_est}) might converge to, we will introduce its corresponding ordinary differential equation (ODE). Using \eqref{eq_observation} and \eqref{eq_est_ctrl}, the recursive beam tracker in (\ref{eq_est}) can also be expressed as\vspace{-1mm}
\begin{equation}\label{eq_gx}
\hat{x}_{n} = \left[ \hat{x}_{n-1} + a_n \left( f(\hat{x}_{n-1}, x_n) - \frac{\operatorname{Im}\left\{{z}_n\right\}}{\sqrt{\rho}} \right) \right]_{-1}^1,\vspace{-1mm}
\end{equation}
where function $f: \mathbb{R} \times  \mathbb{R}\mapsto \mathbb{R}$ is defined as\vspace{-1.5mm}
\begin{equation}\label{eq_fx}
	\begin{aligned}
	f(v, x_n) \overset{\Delta}{=} - \frac{1}{\sqrt{M}}\operatorname{Im}\left\{\mathbf{a}(v)^\text{H}\mathbf{a}(x_n)\right\}.
	\end{aligned}\vspace{-1.5mm}
\end{equation}
This recursive procedure can be seen as a noisy, discrete-time approximation of the following ODE \cite[Section 2.1]{borkar2008stochastic}\vspace{-1mm}
\begin{equation}\label{eq_ODE}
\frac{d \hat{x}(t)}{dt} \!=\!\left\{ \begin{array}{cl} \max\{f(-1, x_n),0\} &~\text{if}~\hat{x}(t) = -1 \\
f(\hat{x}(t),x_n) &~\text{if}~-1<\hat{x}(t) < 1 \\
\min\{f(1, x_n),0\} &~\text{if}~\hat{x}(t) = 1,\end{array}\right.\vspace{-1mm}
\end{equation}
with $t\geq0$ and $\hat{x}(0) = \hat{x}_0$. According to \cite{kushner2003stochastic, borkar2008stochastic}, the recursive procedure will converge to one of the stable points of the ODE \eqref{eq_ODE}. Here the stable point of the ODE \eqref{eq_ODE} is defined as a point $v_0$ that satisfies $f(v_0, x_n) = 0$ and $f_v'(v_0, x_n) < 0$, which means that any starting point from the neighbourhood of $v_0$ will make the ODE converge to $v_0$ itself.

\begin{figure}[!t]
\centering
\vspace{-2mm}
\includegraphics[width=5.5cm]{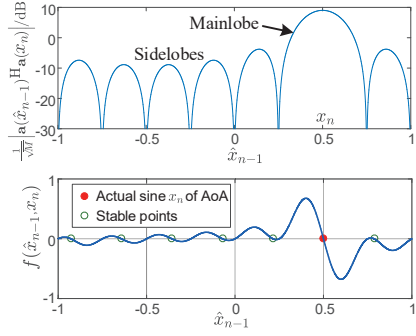}
\vspace{-2mm}
\caption{$\frac{1}{\sqrt{M}}\left|\mathbf{a}(\hat{x}_{n-1})^\text{H}\mathbf{a}(x_n)\right|$ and $f(\hat{x}_{n-1}, x_n)$ vs. $\hat{x}_{n-1}$ for $M = 8$, $x_n = 0.5$, $d = 0.5\lambda$. Note that the stable points are not at the sidelobe peaks.}
\vspace{-5mm}
\label{fig_imagpart}
\end{figure}

As depicted in Fig. \ref{fig_imagpart}, $f(v, x_n)$ is not monotonic in $v$ (i.e., Problem \eqref{eq_problem} is non-convex), and within each lobe (i.e., the mainlobe or the sidelobe) of the antenna array pattern, there exists one stable point. The \emph{local optimal stable points} for the recursive procedure are given by\vspace{-1mm}
\begin{equation}
\begin{aligned}\label{eq_stable_points}
\mathcal{S}(x_n)=&~\left\{ v \in (-1, 1]: f(v, x_n) =0, f_v'(v, x_n) <0\right\} \\
       =&~\left\{ v_k \in (-1, 1]: v_k = x_n + \frac{k\lambda}{(M-1)d}, k \in \mathbb{Z} \right\}.
\end{aligned}\vspace{-1mm}\end{equation}

{\blue 
Since the noise power is not zero, i.e., $\rho < \infty$, the recursive procedure may get out of the mainlobe and converge to one of the sub-optimal beam directions instead of the real beam direction, even though the initial value is within the mainlobe.} Note that except for $x_n$, the antenna array gain is quite low at other local optimal stable points in $\mathcal{S}(x_n)$, where the loss of antenna array gain is nearly 20~dB and will be higher if more antennas are configured. Hence, one key challenge is \emph{how to ensure that Algorithm \ref{alg_1} converges to the real direction $x_n$, instead of other local optimal stable points  in $\mathcal{S}(x_n)$.}

\vspace{-3mm}
\subsection{Step-size Design and Asymptotic Optimality Analysis}\label{sec_performance_analysis}\vspace{-0.5mm}
In \emph{static} beam tracking, we adopt the widely used diminishing step-sizes, given by \cite{nevel1973stochastic,kushner2003stochastic, borkar2008stochastic}
\begin{equation}\label{eq_stepsize}
a_n = \frac{\alpha}{n + N_0}, ~n = 1, 2, \ldots,
\end{equation}
where $\alpha > 0$ and $N_0 \ge 0$.

{\blue The traditional theory in stochastic approximation and recursive estimation does not provide an available method to prove the convergence to the optimal solution and the minimum CRLB under the condition of multiple stable points (see, e.g., \cite{nevel1973stochastic,kushner2003stochastic}). Therefore, we refer to the method that uses the ODE to obtain the lower bound of probability of convergence to any stable point \cite{borkar2008stochastic}. By using these tools, we propose a new basic theory to analyze Algorithm \ref{alg_1}. In particular, we now develop a series of three theorems to resolve the challenge mentioned in Section \ref{sec_stable}. }




\begin{theorem}[\textbf{Convergence to Stable Points}]\label{th_convergence}
\vspace{-1mm}
If $a_n$ is given by (\ref{eq_stepsize}) with any $\alpha > 0$ and $N_0 \ge 0$, then $\hat{x}_n$  converges to a unique point within $\mathcal{S}(x) \cup \{ -1\} \cup \{1 \}$ with probability one.
\vspace{-1mm}
\end{theorem}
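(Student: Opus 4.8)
The plan is to recognize the recursion \eqref{eq_gx} as a projected stochastic-approximation (Robbins--Monro) scheme and to invoke the ODE method of \cite{borkar2008stochastic}. Writing the update as $\hat{x}_n = [\hat{x}_{n-1} + a_n(f(\hat{x}_{n-1}, x) + \xi_n)]_{-1}^1$ with drift $f(\cdot, x)$ from \eqref{eq_fx} and noise $\xi_n = -\operatorname{Im}\{z_n\}/\sqrt{\rho}$, its asymptotic behaviour is governed by the projected ODE \eqref{eq_ODE}. The strategy has three parts: (i) verify the standard hypotheses so that $\hat{x}_n$ converges almost surely to an internally chain-transitive invariant set of \eqref{eq_ODE}; (ii) show that these invariant sets are isolated equilibria, so that the limit is a single point; and (iii) rule out the unstable interior equilibria, leaving only $\mathcal{S}(x) \cup \{-1\} \cup \{1\}$.

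First I would check the hypotheses of the ODE method. The drift $f(v, x)$ is a finite trigonometric sum in $v$, hence smooth with bounded derivative and therefore globally Lipschitz on $[-1,1]$. Letting $\mathcal{F}_{n-1}$ be the $\sigma$-algebra generated by $(z_1, \ldots, z_{n-1})$, the noise satisfies $\mathbb{E}[\xi_n \mid \mathcal{F}_{n-1}] = 0$ and $\mathbb{E}[\xi_n^2 \mid \mathcal{F}_{n-1}] = 1/(2\rho)$, since $z_n$ is i.i.d. $\mathcal{CN}(0,1)$ and independent of the past; thus $\{\xi_n\}$ is a martingale-difference sequence with uniformly bounded conditional variance. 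The step-sizes \eqref{eq_stepsize} obey $a_n > 0$, $\sum_n a_n = \infty$, and $\sum_n a_n^2 < \infty$ for every $\alpha > 0$ and $N_0 \ge 0$, and the truncation $[\cdot]_{-1}^1$ keeps $\hat{x}_n \in [-1,1]$, so the iterates are bounded automatically. Under exactly these conditions the projected-ODE theorem yields that $\hat{x}_n$ converges a.s. to a compact, connected, internally chain-transitive invariant set of \eqref{eq_ODE}.

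Next I would characterise those invariant sets. On $(-1,1)$ the equilibria of \eqref{eq_ODE} are the zeros of $f(\cdot, x)$, which by \eqref{eq_stable_points} are the isolated points $x + k\lambda/[(M-1)d]$; on the boundary, $-1$ and $1$ are equilibria of the projected dynamics whenever the drift points outward. Because $f$ is real-analytic its zeros are isolated and the dynamics are one-dimensional, so every connected internally chain-transitive invariant set reduces to a single equilibrium, which already gives convergence to \emph{a unique point}. It remains to exclude the \emph{unstable} equilibria, namely the interior zeros with $f_v'(v,x) > 0$ lying strictly between consecutive elements of $\mathcal{S}(x)$. For this I would appeal to the non-convergence-to-unstable-equilibria (avoidance-of-traps) results for stochastic approximation in \cite{borkar2008stochastic}: since the Gaussian noise $\xi_n$ has strictly positive conditional variance and hence excites the expanding direction at every such repelling point, the iterate escapes any neighbourhood of it infinitely often and converges there with probability zero. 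Combining this with part (ii) places the limit in $\mathcal{S}(x) \cup \{-1\} \cup \{1\}$ almost surely.

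The main obstacle is step (iii): a generic SA/ODE convergence theorem only pins the limit down to the set of \emph{all} equilibria, including the unstable interior zeros of $f$, whereas the statement requires the limit to be stable. Making the escape argument rigorous calls for verifying non-degeneracy of the noise along the unstable direction uniformly in a neighbourhood of each repelling point, together with careful treatment of how this escape dynamics interacts with the projection near $\pm 1$. This is precisely the gap left open by the classical single-stable-point theory of \cite{nevel1973stochastic, kushner2003stochastic}, and it is where the bulk of the technical effort will be concentrated.
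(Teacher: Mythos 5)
Your proposal follows the same basic route as the paper: both treat \eqref{eq_gx} as a projected stochastic-approximation scheme and reduce the question to the asymptotics of the ODE \eqref{eq_ODE}, and your hypothesis-checking (square-summable but not summable step-sizes, bounded second moments, continuous/Lipschitz drift, martingale-difference noise, boundedness via the truncation) matches the paper's verification list almost item for item. The difference is in how the final conclusion is extracted. The paper invokes Theorem 5.2.1 of \cite{kushner2003stochastic} in one shot, reading its conclusion as almost-sure convergence to a unique point of the stable set, so the exclusion of the unstable interior zeros of $f$ is absorbed into the cited theorem rather than argued separately; the paper's only additional work is to identify that set as $\mathcal{S}(x)\cup\{-1\}\cup\{1\}$ via \eqref{eq_stable_points} together with the sign of $f(\pm 1, x)$ at the boundary. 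You instead go through the internally-chain-transitive-set formulation of \cite{borkar2008stochastic} and then bolt on an avoidance-of-traps argument to rule out the repelling equilibria, and you correctly flag that this last step is where the real technical content lies (uniform non-degeneracy of the noise in the expanding direction near each repeller, and its interaction with the projection at $\pm 1$). Your decomposition is more explicit about where the difficulty sits --- the paper's one-line citation quietly packages the exclusion of unstable equilibria into the quoted theorem --- but it leaves that step as a sketch rather than completing it; to match the paper you can either carry out the trap-avoidance argument in full or, as the paper does, cite a version of the Kushner--Yin result whose conclusion is already phrased in terms of the stable points.
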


%
%
%

\begin{proof}
See Appendix \ref{proof_converge}.
\end{proof}

%
%

Hence, for general step-size parameters $\alpha$ and $N_0$ in \eqref{eq_stepsize}, $\hat{x}_n$ converges to a stable point in $\mathcal{S}(x)$ or a boundary point. {\blue However, as mentioned in Section \ref{sec_stable}, among all these stable points, only the real beam direction $x$ can ensure the optimum antenna array gain, which is much higher than other local optimal stable points. Therefore, we further consider the case that $\hat{x}_n$ converges to the optimum solution $x$.}

\begin{theorem}[\textbf{Convergence to the Real Direction $x$}]\label{th_lock}
\vspace{-1mm}
If (i) the initial point satisfies $\hat{x}_0 \in \mathcal{B}\left(x\right)$, (ii) $a_n$ is given by (\ref{eq_stepsize}) with any $\alpha > 0$,
then there exist $N_0 \ge 0$ and $C_0>0$ such that\vspace{-1mm}
\begin{equation}\label{eq_lock}
P\left( \left. \hat{x}_n \rightarrow x \right| \hat{x}_0 \in \mathcal{B}\left(x\right) \right) \geq 1- 2e^{-C_0\cdot\frac{\rho}{\alpha^2}}.\vspace{-1mm}
\end{equation}


\vspace{-1mm}
\end{theorem}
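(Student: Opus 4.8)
The plan is to prove Theorem \ref{th_lock} by adapting the \emph{lock-in probability} methodology of \cite{borkar2008stochastic} to the one-dimensional recursion \eqref{eq_gx}, quantifying both the geometry of the mainlobe basin and the sub-Gaussian tail of the observation noise so as to extract the explicit bound $1-2e^{-C_0\rho/\alpha^2}$.

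First I would establish that the mainlobe set $\mathcal{B}(x)$ in \eqref{eq_mainlobe} lies inside the domain of attraction of $x$ for the ODE \eqref{eq_ODE}. Writing $\phi=\frac{2\pi d}{\lambda}(v-x)$ and using the Dirichlet-kernel form $\mathbf{a}(v)^\text{H}\mathbf{a}(x)=e^{j(M-1)\phi/2}\sin(M\phi/2)/\sin(\phi/2)$, the drift in \eqref{eq_fx} becomes $f(v,x)=-\frac{1}{\sqrt{M}}\sin\!\big(\tfrac{(M-1)\phi}{2}\big)\sin\!\big(\tfrac{M\phi}{2}\big)/\sin\!\big(\tfrac{\phi}{2}\big)$. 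For $0<|v-x|<\lambda/(Md)$ (i.e. $0<|\phi|<2\pi/M$) every factor is nonzero and $f$ has sign opposite to $v-x$, so $x$ is the unique zero of $f$ in $\mathcal{B}(x)$ and is strictly attracting; the endpoints $v=x\pm\lambda/(Md)$ are the first pattern nulls and are repelling (watershed) points, while the nearest competing stable points of $\mathcal{S}(x)$ sit at $x\pm\lambda/((M-1)d)$, strictly \emph{outside} $\mathcal{B}(x)$. Hence the trajectory must cross one of these repelling nulls before it can reach any competing lobe, which supplies a strictly positive escape margin $\delta_0$.

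Next I would cast the recursion in the standard form $\hat{x}_n=[\hat{x}_{n-1}+a_n(f(\hat{x}_{n-1},x)+\xi_n)]_{-1}^1$, where $\xi_n=-\operatorname{Im}\{z_n\}/\sqrt{\rho}$ is a martingale-difference sequence that is sub-Gaussian with variance proxy $1/(2\rho)$, since $\operatorname{Im}\{z_n\}\sim\mathcal{N}(0,1/2)$. Following \cite{borkar2008stochastic}, the deviation of the interpolated iterate from the ODE flow over a finite horizon is controlled, via a Gronwall estimate using the local Lipschitz constant $L$ of $f$ on $\mathcal{B}(x)$, by the weighted noise sums $\zeta_{n,m}=\sum_{k=n}^m a_k\xi_k$. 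I would then show that on the event $\sup_{m\ge n}|\zeta_{n,m}|\le\delta$, with $\delta=\delta_0 e^{-LT}$ and $T$ the ODE time needed to enter a small ball around $x$, the iterate cannot cross a repelling null, remains trapped in $\mathcal{B}(x)$, and is therefore an asymptotic pseudo-trajectory of \eqref{eq_ODE} that converges to its unique attractor $x$.

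The crux is then a concentration bound on $\sup_{m\ge n}|\zeta_{n,m}|$. Because $\{\xi_k\}$ is a sub-Gaussian martingale-difference sequence, the weighted partial sums form a sub-Gaussian martingale, and a maximal (Azuma--Hoeffding/Doob) inequality gives $P(\sup_{m\ge 1}|\zeta_{1,m}|\ge\delta)\le 2\exp\!\big(-\delta^2/(2\sigma^2)\big)$ with total variance $\sigma^2=\frac{1}{2\rho}\sum_{k\ge1}a_k^2\le\frac{\alpha^2}{2\rho}\sum_{k\ge1}(k+N_0)^{-2}\le\frac{\alpha^2}{2\rho N_0}$ for $N_0\ge1$. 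Substituting yields $P(\hat{x}_n\to x)\ge 1-2\exp(-\delta^2\rho N_0/\alpha^2)$, i.e. the claim with $C_0=\delta^2 N_0$ up to the absolute constant of the maximal inequality; one picks $N_0$ large enough that the step sizes are small enough to validate the interpolation/Gronwall step and to realize the margin $\delta$. The main obstacle I anticipate is making the trapping step fully rigorous: the restoring drift vanishes at $x$ and the nulls are only weakly repelling, so converting ``small accumulated noise'' into ``never escapes and converges'' requires the Gronwall/pseudo-trajectory argument rather than a single-step supermartingale bound, and it is the geometric margin $\delta_0$ together with its uniformity over admissible $\hat{x}_0\in\mathcal{B}(x)$ bounded away from the nulls that ultimately fixes $C_0$.
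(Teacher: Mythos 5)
Your proposal follows essentially the same route as the paper's proof: the Borkar lock-in-probability argument, comparing the linearly interpolated iterates to the ODE flow via a discrete Gronwall estimate over finite-length intervals, showing the iterates are trapped in an invariant subset of the mainlobe on the event that the weighted Gaussian noise sums stay small, and then bounding that event's failure probability with a Doob-type maximal inequality whose variance proxy $\tfrac{1}{2\rho}\sum_k a_k^2 = O(\alpha^2/(\rho N_0))$ yields the $2e^{-C_0\rho/\alpha^2}$ tail. The only cosmetic difference is that you invoke a single global maximal inequality on $\sup_m|\zeta_{1,m}|$, whereas the paper applies it per interval $I_m$ and re-sums the tails (its Lemma~\ref{le_sum}); both give the same form of bound.
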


\begin{proof}[Proof Sketch]
Motivated by Chapter 4 of \cite{borkar2008stochastic}, we will prove this theorem in three steps: in \emph{Step 1}, we will construct two continuous processes based on the discrete process $\left\{ \hat{x}_n \right\}$; in \emph{Step 2}, using these continuous processes, we establish a sufficient condition for the convergence of the discrete process $\left\{ \hat{x}_n \right\}$;  in \emph{Step 3}, we will derive the lower bound of probability for this condition, which is also a lower bound for $P\left( \left. \hat{x}_n\!\rightarrow\!x \right| \hat{x}_0\!\in\!\mathcal{B}\left(x\right) \right)$. See Appendix \ref{proof_lock} for the details.
\end{proof}

%
%

\ifreport
%
%
%
%

\fi

By Theorem \ref{th_lock}, if the initial point $\hat{x}_0$ is in the mainlobe $\mathcal{B}(x)$, the probability that $\hat{x}_n$ does not converge to $x$ decades \emph{exponentially} with respect to (w.r.t.) ${\rho}/{\alpha^2}$. Hence, we can increase the SNR $\rho$ and reduce the step-size parameter $\alpha$ to ensure $\hat{x}_n\!\rightarrow\!x$ with high probability. {\blu Under the condition of $\rho = 10~\text{dB}$ and $M = 8\text{-}128$, typical values of $N_0$ required by the sufficient condition in Theorem \ref{th_lock} are 10-50.} However, we can choose any $N_0\!\geq\!0$ to achieve a sufficiently high probability of $\hat{x}_n\!\rightarrow\!x$  in simulations.


\begin{theorem}[\textbf{Convergence to $x$ with the Minimum MSE}]
\vspace{-1mm}
\label{th_normal}
If (i) $a_n$ is given by (\ref{eq_stepsize}) with\vspace{-1mm}
\begin{equation}\label{eq_alpha}
\alpha = \frac{\lambda}{\sqrt{M}(M-1)\pi d} \overset{\Delta}{=} \alpha^*,\vspace{-1mm}
\end{equation}
and any $N_0 \ge 0$, and (ii) $\hat{x}_n \rightarrow x$,
then\vspace{-1mm}
\begin{equation}\label{eq_normal}
	\sqrt{n}\left(\hat{x}_n - x\right) \overset{d}{\rightarrow} \mathcal{N} \left(0, I_{\max}^{-1}\right),\vspace{-1mm}
\end{equation}
as $n \rightarrow \infty$, where $\overset{d}{\rightarrow}$ represents convergence in conditional distribution given $\hat{x}_n \rightarrow x$, and $I_{\max}$ is defined in \eqref{eq_opt_MMSE}. In addition,\vspace{-1mm}
\begin{equation}\label{eq_normal2}
	\lim_{n\rightarrow\infty}~n~\mathbb{E}\left[\left(\hat{x}_n - x\right)^2\big| \hat{x}_n \rightarrow x\right] = I_{\max}^{-1}.
\end{equation}
\vspace{-3mm}
\end{theorem}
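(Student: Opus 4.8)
The plan is to recognize \eqref{eq_normal} as the classical asymptotic-normality statement for a Robbins--Monro/stochastic-Newton recursion with step-size $a_n\sim\alpha/n$, and to invoke the central limit theory for stochastic approximation (e.g., Nevelson--Khasminskii \cite{nevel1973stochastic} or Kushner--Yin \cite{kushner2003stochastic}) after reducing \eqref{eq_gx} to its local linearization around the stable point $x$. Since we condition on $\{\hat{x}_n\rightarrow x\}$ and $x\in(-1,1)$, I would first argue that the truncation $[\cdot]_{-1}^1$ in \eqref{eq_gx} is asymptotically inactive: on this event there is an a.s.\ finite random time after which $\hat{x}_n$ remains in a neighborhood of $x$ interior to $[-1,1]$, so from then on the recursion coincides with the unconstrained update $\hat{x}_n-x = \hat{x}_{n-1}-x + a_n\bigl(f(\hat{x}_{n-1},x)+\xi_n\bigr)$, with i.i.d.\ martingale-difference noise $\xi_n=-\operatorname{Im}\{z_n\}/\sqrt{\rho}$.

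The two concrete ingredients are the local slope of $f$ and the noise variance. Using $\mathbf{a}(v)^\text{H}\mathbf{a}(x)=\sum_{m=1}^{M}e^{j\frac{2\pi d}{\lambda}(m-1)(v-x)}$ in \eqref{eq_fx} gives $f(v,x)=-\frac{1}{\sqrt{M}}\sum_{m=1}^{M}\sin\!\bigl(\frac{2\pi d}{\lambda}(m-1)(v-x)\bigr)$, hence $f_v'(x,x)=-\frac{1}{\sqrt{M}}\frac{2\pi d}{\lambda}\sum_{m=1}^{M}(m-1)=-\frac{\sqrt{M}(M-1)\pi d}{\lambda}=-1/\alpha^*$, with $\alpha^*$ exactly as in \eqref{eq_alpha}. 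Because $z_n$ is unit-variance circularly symmetric complex Gaussian, $\xi_n$ is i.i.d.\ $\mathcal{N}(0,\tfrac{1}{2\rho})$, so the noise variance is $S=\tfrac{1}{2\rho}$. Writing $f(\hat{x}_{n-1},x)=-\frac{1}{\alpha^*}(\hat{x}_{n-1}-x)+o(|\hat{x}_{n-1}-x|)$, valid since $f$ is a smooth trigonometric polynomial, exhibits the recursion as a linearly stable scheme with Hurwitz slope $-1/\alpha^*$.

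I would then apply the stochastic-approximation CLT: for $a_n\sim\alpha/n$, drift slope $-\gamma$ with $\gamma=1/\alpha^*$, and noise variance $S$, the $\sqrt{n}$-rate holds provided the resonance condition $\alpha\gamma>\tfrac{1}{2}$, and $\sqrt{n}(\hat{x}_n-x)\overset{d}{\rightarrow}\mathcal{N}(0,\Sigma)$ with $\Sigma=\frac{\alpha^2 S}{2\alpha\gamma-1}$. The choice $\alpha=\alpha^*$ makes $\alpha\gamma=1$, so $2\alpha\gamma-1=1$ and $\Sigma=(\alpha^*)^2 S=\frac{\lambda^2}{M(M-1)^2\pi^2 d^2}\cdot\frac{1}{2\rho}=I_{\max}^{-1}$ by \eqref{eq_opt_MMSE}, which is precisely \eqref{eq_normal}. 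Here I would verify the standard hypotheses (negative/Hurwitz slope, finite noise variance with a Lindeberg condition that is automatic for Gaussian noise, smoothness of $f$, and local a.s.\ convergence to $x$, the last supplied by the conditioning together with Theorem~\ref{th_lock}). For the moment statement \eqref{eq_normal2}, I would track the conditional second moment directly: $V_n:=\mathbb{E}[(\hat{x}_n-x)^2\mid\hat{x}_n\rightarrow x]$ satisfies $V_n=(1-a_n/\alpha^*)^2 V_{n-1}+a_n^2 S+o(\cdot)$, and solving with $a_n=\alpha^*/(n+N_0)$ yields $nV_n\rightarrow(\alpha^*)^2 S=I_{\max}^{-1}$; a uniform $L^{2+\delta}$ bound from the same contraction estimate (using $\hat{x}_n\in[-1,1]$) then supplies the uniform integrability needed to pass from convergence in distribution to convergence of the conditional second moment.

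I expect the main obstacle to be the conditioning on the probability-less-than-one event $\{\hat{x}_n\rightarrow x\}$, since the textbook CLTs presume convergence with probability one. I would localize the argument: the limiting law is determined only by the portion of the trajectory after it enters and is trapped in a small neighborhood of $x$, a property of the linearization that is insensitive to the starting point, so conditioning on convergence does not perturb the Gaussian limit or its variance. The second delicate point is \eqref{eq_normal2}, because convergence in distribution alone does not imply convergence of moments; the uniform-integrability estimate must be carried out on the conditioned process rather than imported as a black box.
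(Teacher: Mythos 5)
Your proposal is correct and follows essentially the same route as the paper: the paper likewise invokes the asymptotic-normality theorem for stochastic approximation (Theorem 6.6.1 of \cite{nevel1973stochastic}), computes the same local slope $f_v'(x,x)=-\sqrt{M}(M-1)\pi d/\lambda=-1/\alpha^*$ and noise variance $1/(2\rho)$, imposes the same stability condition $\alpha>\alpha^*/2$, and arrives at the identical variance formula $\Sigma=\alpha^2/\bigl[2\rho\,(2\alpha/\alpha^*-1)\bigr]$, minimized to $I_{\max}^{-1}$ at $\alpha=\alpha^*$. The only substantive difference is that you explicitly flag and address the passage from convergence in distribution to convergence of the conditional second moment in \eqref{eq_normal2} via a direct variance recursion and uniform integrability, a step the paper asserts without detailed justification.
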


\begin{proof}
See Appendix \ref{proof_normal}.
\end{proof}

%
%

Theorem \ref{th_normal} tells us that $\alpha$ should not be too small: If $\alpha=\alpha^*$ in \eqref{eq_alpha}, then the minimum CRLB on the RHS of \eqref{eq_CRLB} is achieved asymptotically with high probability, which ensures the highest convergence rate\footnote{The convergence rate is defined as the asymptotic properties of normalized errors \cite{Yin1994rate}, i.e., $\lim\limits_{n\rightarrow\infty}~n~\mathbb{E}\left[\left(\hat{x}_n - x\right)^2\right]$. Algorithm \ref{alg_1} is capable of approaching the minimum MSE, which corresponds to the highest convergence rate.}. In practice, we suggest to choose $\alpha=\alpha^*$ and $N_0=0$ in \eqref{eq_stepsize}.
Interestingly, Theorem \ref{th_normal} can be readily generalized to the track of any smooth function of $x$:
\begin{corollary}\label{co_1}
\vspace{-1mm}
If the conditions of Theorem \ref{th_normal} are satisfied, then for any first-order differentiable vector function $\mathbf{u}(v)$
\begin{align}
\!\!\lim_{n\rightarrow\infty} n\!~\mathbb{E}\!\left[\left\| \mathbf{u}(\hat{x}_n) - \mathbf{u}(x)\right\|^2_2\Big| \hat{x}_n \rightarrow x\right]\!=\!\left\|{ {\mathbf{u}}'(x) }\right\|^2_2 I_{\max}^{-1}.
\end{align}
\vspace{-4.5mm}
\end{corollary}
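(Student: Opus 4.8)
The plan is to recognize Corollary \ref{co_1} as a delta-method strengthening of Theorem \ref{th_normal}: the asymptotic normality and second-moment convergence already established for $\hat{x}_n$ are pushed through the map $\mathbf{u}$ by a first-order Taylor expansion, and the only real work is to upgrade the resulting distributional statement into a statement about the limiting second moment. First I would define the scalar ratio function $\phi(h)\overset{\Delta}{=}\|\mathbf{u}(x+h)-\mathbf{u}(x)\|_2^2/h^2$ for $h\neq 0$ and $\phi(0)\overset{\Delta}{=}\|\mathbf{u}'(x)\|_2^2$. Since $\mathbf{u}$ is first-order differentiable at $x$, $\phi$ is continuous at $0$; and because the projection $[\cdot]_{-1}^1$ in \eqref{eq_est} confines $\hat{x}_n$ to $[-1,1]$, the argument $h=\hat{x}_n-x$ ranges over a compact set on which $\phi$ is continuous away from $0$ and has a finite limit at $0$, hence is bounded by some constant $C<\infty$. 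This yields the exact identity $n\,\|\mathbf{u}(\hat{x}_n)-\mathbf{u}(x)\|_2^2=\phi(\hat{x}_n-x)\cdot n\,(\hat{x}_n-x)^2$, which is the object I would then take expectations of.

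Next I would invoke Theorem \ref{th_normal}. From \eqref{eq_normal}, $\sqrt{n}(\hat{x}_n-x)\overset{d}{\rightarrow}\mathcal{N}(0,I_{\max}^{-1})$, so $n(\hat{x}_n-x)^2\overset{d}{\rightarrow}I_{\max}^{-1}W$ with $W$ a chi-squared random variable with one degree of freedom, whose mean is $\mathbb{E}[I_{\max}^{-1}W]=I_{\max}^{-1}$. Crucially, \eqref{eq_normal2} states that the first moments also converge, $\mathbb{E}[\,n(\hat{x}_n-x)^2\,]\to I_{\max}^{-1}$, i.e.\ to the mean of the limiting law. For a sequence of nonnegative random variables, convergence in distribution together with convergence of the first moments to the first moment of the limit forces uniform integrability of $\{n(\hat{x}_n-x)^2\}$; this is the key fact that Theorem \ref{th_normal} supplies beyond mere weak convergence.

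To conclude, I would combine continuity with uniform integrability. Under the conditioning event $\hat{x}_n\to x$, continuity of $\phi$ at $0$ gives $\phi(\hat{x}_n-x)\to\|\mathbf{u}'(x)\|_2^2$ in probability, and Slutsky's theorem then yields $\phi(\hat{x}_n-x)\cdot n(\hat{x}_n-x)^2\overset{d}{\rightarrow}\|\mathbf{u}'(x)\|_2^2\,I_{\max}^{-1}W$. Since $\phi$ is bounded by $C$, this product is dominated by $C$ times a uniformly integrable sequence, hence is itself uniformly integrable; uniform integrability plus convergence in distribution delivers convergence of expectations, giving
\begin{equation*}
\lim_{n\to\infty} n\,\mathbb{E}\!\left[\|\mathbf{u}(\hat{x}_n)-\mathbf{u}(x)\|_2^2\,\big|\,\hat{x}_n\to x\right]=\|\mathbf{u}'(x)\|_2^2\,\mathbb{E}[I_{\max}^{-1}W]=\|\mathbf{u}'(x)\|_2^2\,I_{\max}^{-1},
\end{equation*}
as claimed.

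The routine part is the Taylor factorization; the hard part is the upgrade from the distributional conclusion of Theorem \ref{th_normal} to a claim about the limiting second moment of $\mathbf{u}(\hat{x}_n)$, since weak convergence alone does not transfer to expectations. The two ingredients that make this go through are the boundedness of $\hat{x}_n$ in $[-1,1]$, which renders $\phi$ bounded so that the product can be dominated, and the explicit moment statement \eqref{eq_normal2} of Theorem \ref{th_normal}, which is precisely what certifies the uniform integrability. If one preferred to avoid the uniform-integrability machinery, an alternative would be to expand $\|\mathbf{u}(\hat{x}_n)-\mathbf{u}(x)\|_2^2$ explicitly and bound the Taylor remainder and cross terms by $o\!\big((\hat{x}_n-x)^2\big)$ uniformly on the compact range of $\hat{x}_n$, but the argument above is the cleanest route.
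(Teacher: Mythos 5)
Your proof is correct, but it takes a genuinely different route from the paper's. The paper's proof works through the Cram\'er--Rao machinery: it writes the lower bound for estimating $\mathbf{u}(x)$ componentwise, transforms the Fisher information under the reparametrization $x \mapsto u_m(x)$ via the chain rule to get $I'_{\max,m} = I_{\max}/[u_m'(x)]^2$, and then asserts that the asymptotic MSE of the plug-in estimate $u_m(\hat{x}_n)$ attains this transformed bound --- i.e., it invokes the invariance of asymptotic efficiency under smooth reparametrization without explicitly proving the transfer step. Your argument is a direct delta-method computation: the factorization $n\left\| \mathbf{u}(\hat{x}_n) - \mathbf{u}(x)\right\|^2_2 = \phi(\hat{x}_n - x)\cdot n(\hat{x}_n - x)^2$, boundedness of $\phi$ on the compact range forced by the projection in \eqref{eq_est}, and the observation that \eqref{eq_normal} together with \eqref{eq_normal2} certifies uniform integrability of $\{n(\hat{x}_n - x)^2\}$ (convergence in distribution of nonnegative variables plus convergence of their means to the mean of the limit). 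What the paper's approach buys is brevity and an interpretation of the limit as the minimum CRLB for $\mathbf{u}(x)$; what yours buys is rigor at exactly the point the paper glosses over, since weak convergence alone does not yield convergence of second moments, and you correctly identify \eqref{eq_normal2} as the ingredient that closes that gap. Your proof could fairly be regarded as the missing justification for the paper's final ``which results in'' step.
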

\begin{proof}
\ifreport
See Appendix \ref{proof_co_1}.
\else
See Appendix E in our technical report \cite{Li2017analog}. 
\fi
\end{proof}

For example, consider the channel response $\mathbf{h} = \beta \mathbf{a}(x)$ and its estimate $\hat{\mathbf{h}}_n =  \beta \mathbf{a}(\hat{x}_n)$. If $\alpha = \alpha^*$ and $N_0=0$,
Corollary \ref{co_1} tells us that, with a high probability, the minimum CRLB of $\mathbf{h}$ is achieved in the following limit:\vspace{-1mm}
\begin{equation}\label{eq_CRLB_CR}
\begin{aligned}
	\lim_{n\rightarrow\infty}~n\!~\mathbb{E}\left[\left.\left\| \hat{\mathbf{h}}_n - \mathbf{h}\right\|^2_2 \right| \hat{x}_n \rightarrow x \right]
	= I_{\max}^{-1} \sum_{m=1}^{M-1}\left|\frac{\partial \left(\beta e^{-j \frac{2\pi d}{\lambda}m x}\right)}{\partial x}\right|^2 = \frac{(2M-1)\sigma^2}{3(M-1)|p|^2}.
\end{aligned}\vspace{-1mm}
\end{equation}

\vspace{-3mm}
\subsection{Further Discussion: To Track the AoA $\theta$ or its Sine $x$?}\label{further_discussion}
\vspace{-0.5mm}
We can design the analog beam tracking algorithm by tracking either the AoA $\theta$ or its sine $x$. The  algorithm that tracks  the spatial frequency $x$ is provided in Algorithm \ref{alg_1}.  The  algorithm that directly tracks the AoA $\theta$ is described in Algorithm \ref{alg_2}.

\begin{algorithm}[t]
\caption{Angular Domain Recursive Analog Beam Tracking}
\label{alg_2}
\begin{algorithmic}
\STATE {\begin{itemize}
\item[1)] \textbf{Coarse Beam Sweeping:} 
Receive $M$ pilots successively. The analog beamforming vector $\tilde{\mathbf{w}}_{m}$ for receiving the $m$-th training signal $\tilde{y}_m$ is given by \eqref{eq_codebook}.
Obtain the initial estimate $\hat{\theta}_0$ of the beam direction by\vspace{-1.25mm}
\begin{equation}
\begin{aligned}
	\hat{\theta}_0\!=\!\arcsin\left\{\underset{\hat{x} \in \mathcal{X}}{\arg\max}~\left|\mathbf{a}(\hat{x})^\text{H}\tilde{\mathbf{W}}\tilde{\mathbf{y}} \right|\right\}.
\end{aligned}\vspace{-1.25mm}
\end{equation}
where $\tilde{\mathbf{y}} = [\tilde{y}_1~ \cdots~ \tilde{y}_M]^\text{T}$, $\tilde{\mathbf{W}} = [\tilde{\mathbf{w}}_1~ \cdots~ \tilde{\mathbf{w}}_M]$, and $\mathcal{X} = \left\{\frac{1 - M_0}{M_0}, \frac{3 - M_0}{M_0}, \ldots, \frac{M_0-1}{M_0}\right\}$.
\vspace{2mm}
\item[2)] \textbf{Recursive Beam Tracking:} In each time-slot $n = 1, 2, \ldots$, the analog beamforming vector $\mathbf{w}_{n}$ is\vspace{-1mm}
\begin{equation}\label{eq_est_theta_ctrl}
	\begin{aligned}
	\mathbf{w}_{n} = \frac{1}{\sqrt{M}} \mathbf{a}(\sin(\hat{\theta}_{n-1})).
	\end{aligned}\vspace{-1mm}
\end{equation}
The estimate $\hat{\theta}_{n}$ is updated by\vspace{-1mm}
\begin{equation}\label{eq_est_theta}
\begin{aligned}
\hat{\theta}_n = \left[ \hat{\theta}_{n-1} - \frac{a_n}{\cos(\hat{\theta}_{n-1})} \operatorname{Im}\left\{ y_{n} \right\} \right]_{-\frac{\pi}{2}}^{\frac{\pi}{2}},
\end{aligned}\vspace{-1mm}
\end{equation}
where $a_n > 0$ is the step-size.
\end{itemize}
}
\end{algorithmic}
\end{algorithm}

The convergence rate of Algorithm \ref{alg_2} can be characterized by Corollary \ref{co_1} with $u(x) = \arcsin x$. In particular, Algorithm \ref{alg_1} and Algorithm \ref{alg_2} share the same asymptotic convergence rate when $\hat{\theta}_{n}$ is very close to $\theta$. On the other hand,
if $\hat \theta_{n-1}$ is close to $-\frac{\pi}{2}$ or $\frac{\pi}{2}$, $\cos{(\hat \theta_{n-1})}$ in \eqref{eq_est_theta} is close to zero.
{\bluee Therefore, when $\theta_n$ is close to $-\frac{\pi}{2}$ or $\frac{\pi}{2}$, the update part in \eqref{eq_est_theta} has an infinite amplitude and Algorithm \ref{alg_2} will oscillate. However, this oscillation issue does not happen in Algorithm \ref{alg_1}.}

{\bluee Figure \ref{fig_x_vs_theta} depicts the tracking errors in angular degree in both algorithms, where the system parameters are configured as: $p\!=\!(1-j)/\sqrt{2}, \beta\!=\!(1+j)/\sqrt{2}, {\rho}\!=\!10~\text{dB}, M\!=\!8, d\!=\!0.5\lambda$, $\theta\!=\!88^\circ$, $x\!=\!\sin(\theta)\!\approx\!0.9994$, $a_n\!=\!{\alpha^*}/{10}$ or ${\alpha^*}$. It can be observed that both algorithms have similar tracking performance at the beginning. As the estimate gets closer to the real value, Algorithm \ref{alg_2} that tracks $\theta$ starts to oscillate or even escape the mainlobe, while Algorithm \ref{alg_1} is stable.}

\begin{figure}[!t]
\centering
\vspace{-2mm}
\includegraphics[width=14cm]{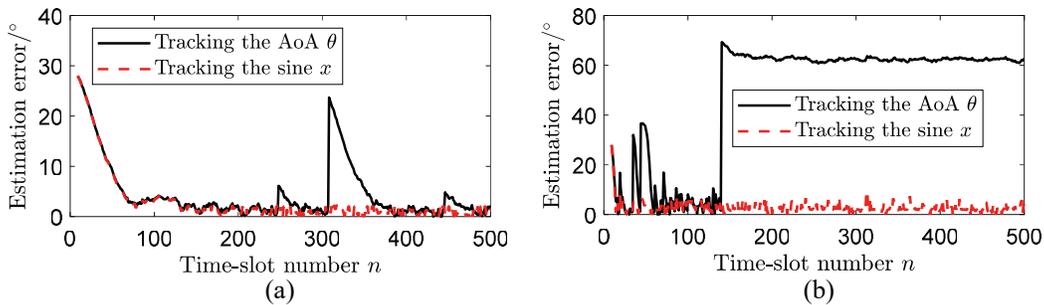}
\vspace{-5mm}
\caption{\bluee Estimation error comparison between the algorithms tracking the AoA $\theta$ and its sine $x$: (a) $a_n\!=\!{\alpha^*}/{10}$; (b) $a_n\!=\!{\alpha^*}$.}
\label{fig_x_vs_theta}
\vspace{-5mm}
\end{figure}

In addition, \eqref{eq_est_ctrl} and \eqref{eq_est} in Algorithm \ref{alg_1} are less complicated than \eqref{eq_est_theta_ctrl} and \eqref{eq_est_theta} in Algorithm \ref{alg_1} (although both algorithms are of low complexity). Because of these reasons, we choose to track the spatial frequency $x$ in this paper, instead of  tracking the AoA $\theta$ directly. If the AoA is needed, then one can use the arcsin function to obtain it, i.e., $\theta = \arcsin x$.

%

\vspace{-3mm}
\section{Numerical Analysis}\label{sec_simulation}
\vspace{-1mm}

We compare Algorithm \ref{alg_1} with four reference algorithms:
\begin{itemize}
\item[1)] \emph{Least squares \cite{KaramiLS2007}:} Sweep all the beamforming directions in the DFT codebook  (\ref{eq_codebook}) and use the least squares algorithm to estimate the channel response $\mathbf{h}_n$. Then obtain the analog beamforming vector $\mathbf{w}_{n}$ for data transmission by
\begin{align}
	w_{mn} = \angle \hat{h}_{nm}, m = 1, 2, \cdots, M,
\end{align}
where $\hat{h}_{nm}$ is the $m$-th element of the estimated channel response $\hat{\mathbf{h}}_n$.

\item[2)] \emph{Compressed sensing \cite{Gao2015multi, Alkhateeb2015Compressed, Rial2016Hybrid}:}
Randomly choose the phase shifts $w_{mn}$ from $\{ \pm 1, \pm j \}$ to receive pilot signals. Then use the sparse recovery algorithm to estimate the spatial frequency $x_n$, where a DFT dictionary with a size of 1024 is utilized.

\item[3)] \emph{IEEE 802.11ad \cite{IEEE80211ad}:} This algorithm contains two stages: beam sweep and beam tracking. In the first stage, sweep the beamforming directions in the DFT codebook (\ref{eq_codebook}) and choose the direction with the strongest received signal as the best beam direction. In the second stage, probe the best beam direction and its two adjacent beam directions, then choose the strongest direction as the new best beam direction. The second stage is performed periodically.

{\blue \item[4)] \emph{Kalman filter \cite{Va2016tracking}:}
We use the coarse beam sweeping method as given in Algorithm \ref{alg_1} to provide an initial estimate for the Kalman filter based algorithm. Then, in each iteration, two training beamforming vectors around the newly estimated AoA are used, where the observation angles' offsets are set as $\pm 3.5^\circ$. These offsets ensure that the observation angles are within the mainlobe. For fairness, we assume that $\beta$ is known in the Kalman filter based algorithm, and only the AoA $\theta_n$ need to be tracked.}

\end{itemize}
Two performance metrics are considered: (i) the MSE of the channel response $\mathbf{h}_n$, defined by\vspace{-1mm}
\begin{equation}
\begin{aligned}
\text{MSE}_{\textbf{h},n} \overset{\Delta}{=} \mathbb{E}\left[\left\| \hat{\mathbf{h}}_n - \mathbf{h}_n\right\|^2_2 \right],
\end{aligned}\vspace{-1mm}
\end{equation}
and (ii)
the achievable rate $R_n$, i.e.,\vspace{-1mm}
\begin{equation}
	R_n \overset{\Delta}{=} \log_2 \left( 1 + \rho\left|\textbf{{w}}^\text{H}_n\textbf{{a}}(x_n)\right|^2 \right).\vspace{-1mm}
\end{equation}
The system parameters are configured as: $p\!=\!(1-j)/\sqrt{2}, \beta\!=\!(1+j)/\sqrt{2}, {\rho}\!=\!10~\text{dB}, M\!=\!16, M_0\!=\!2M, d\!=\!0.5\lambda$. In the following subsections, we will investigate the static and dynamic beam tracking scenarios separately.

\vspace{-2mm}
\subsection{Static Beam Tracking}\vspace{-0.5mm}\label{static_channel_simulation}
In static beam tracking scenarios, we assume that one pilot is allocated in each time-slot. Hence, these algorithms have the same pilot overhead. The received pilot signals of all time-slots $1,\ldots, n$ are used for estimating $x_n$ and $\mathbf{h}_n$ in the compressed sensing and least square algorithms. The step-size $a_n$ is given by (\ref{eq_stepsize}) with $\alpha = \alpha^*$ and $N_0=0$. The simulation results are averaged over 10000 random system realizations, where the beam direction $x$ is randomly generated by a uniform distribution on $[-1, 1]$ in each realization.


Figure \ref{fig_static_mse} plots the convergence performance of $\text{MSE}_{\textbf{h},n}$ over time (i.e., pilot overhead). The MSE of Algorithm \ref{alg_1} converges quickly to the minimum CRLB given in \eqref{eq_CRLB_CR}, which agrees with Corollary \ref{co_1} and is much smaller than those of IEEE 802.11ad, least square, compressed sensing and Kalman filter algorithms.

\begin{figure}[!t]
\centering
\vspace{-3mm}
\includegraphics[width=6cm]{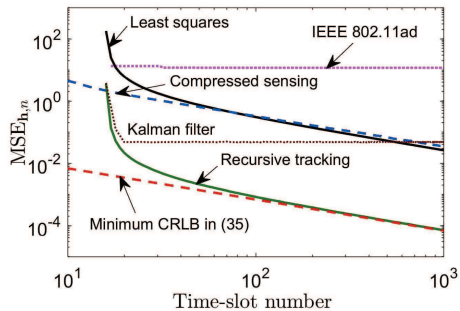}
\vspace{-3mm}
\caption{$\text{MSE}_{\textbf{h},n}$ vs. time-slot number $n$ in static beam tracking.}
\vspace{-5mm}
\label{fig_static_mse}
\end{figure}



\vspace{-2mm}
\subsection{Dynamic Beam Tracking}\vspace{-0.5mm}\label{sec_dynamic_beam_tracking}

In dynamic beam tracking scenarios, where beam direction changes over time. In the beginning, we assume that continuous pilot training is performed and an initial estimate is obtained for all the algorithms. After that, one pilot is allocated in each time-slot to ensure that these algorithms have the same amount of pilot overhead.

The last $M/2$ pilot signals are used in the compressed sensing algorithm and the last $M$ pilot signals are used in the least square algorithm. For the IEEE 802.11ad algorithm, the probing period of its beam tracking stage is 3 time-slots. For the Kalman filter algorithm, the probing period is 2 time-slots. These parameters are chosen to improve the performance of these algorithms.
To keep track of the changing beam direction, the step-size $a_n$ of Algorithm \ref{alg_1} is fixed as
\begin{equation}
a_n = \alpha^* =\frac{\lambda}{\sqrt{M}(M-1)\pi d},~\text{for all}~n \ge 1,
\end{equation}
which is determined by the configuration of the antenna array and is independent of the SNR $\rho$.

Figures \ref{fig_dynamic_tracking} and \ref{fig_dynamic_tracking_rate} depict the AoA tracking and achievable rate performance in dynamic scenarios, where the AoA $\theta_n$ varies according to $\theta_n\!=\!({\pi}/{3})\sin\left({2\pi n}/{1000}\right)\!+\!0.005\vartheta_n$ with $\vartheta_n\!\sim\!\mathcal{N}(0, 1)$. Algorithm \ref{alg_1} always tracks the actual AoA very well, and achieves the channel capacity $7.33$bits/s/Hz in all the time-slots. The performance of Algorithm \ref{alg_1} is much better than the first three reference algorithms, and the Kalman filter algorithm has similar performance as Algorithm \ref{alg_1}, which is due to that the current angular velocity is quite low. 

\begin{figure}[!t]
\centering
\vspace{-2mm}
\includegraphics[width=8cm]{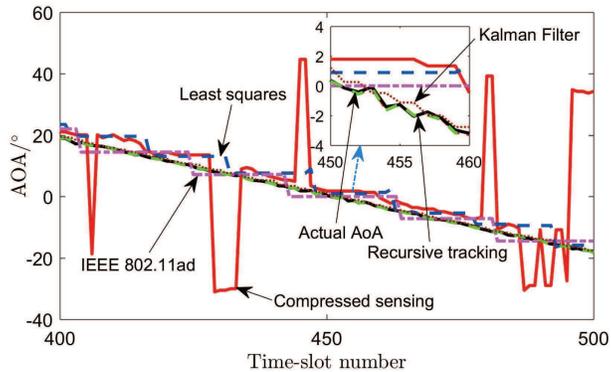}
\vspace{-4.5mm}
\caption{AoA tracking in dynamic beam tracking.}
\vspace{-2mm}
\label{fig_dynamic_tracking}
\end{figure}

\begin{figure}[!t]
\centering
\vspace{-0mm}
\includegraphics[width=8cm]{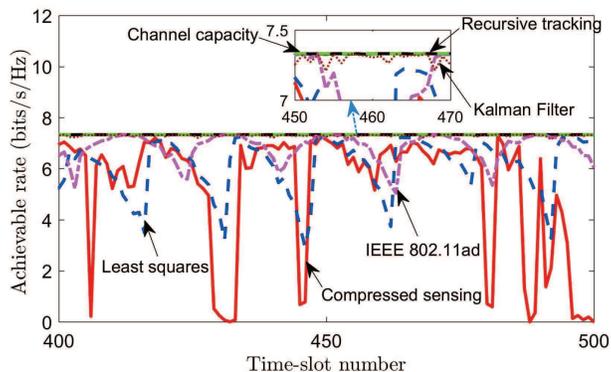}
\vspace{-4.5mm}
\caption{Achievable rate in dynamic beam tracking.}
\vspace{-4mm}
\label{fig_dynamic_tracking_rate}
\end{figure}

Next, we will consider different angular velocities. Figures \ref{fig_dynamic_MSEvsV} and \ref{fig_dynamic_RatevsV} illustrate the average AoA tracking and achievable rate performance under a fixed angular velocity model $\theta_n=\theta_{n-1}+\delta_{n-1}\!\cdot\!\omega$ where $1 \le n \le 10000$, $\theta_0=0$, $\delta_n\in\{-1,\!1\}$ denotes the rotation direction, and $\omega$ is a fixed angular velocity. The rotation direction $\delta_n$ is chosen such that $\theta_n$ varies within $[-{\pi}/{3},\!{\pi}/{3}]$. We can observe that Algorithm \ref{alg_1} can support higher angular velocities and data rates than the other algorithms when all 16 antennas are used. In addition, by using a subset of antennas, e.g., $M=4$ or 8, for beam tracking and all $16$ antennas for data transmission, the beam tracking regime of Algorithm \ref{alg_1} can be further enlarged.

According to Fig. \ref{fig_dynamic_RatevsV}, Algorithm \ref{alg_1} can achieve $95\%$ of the channel capacity when the angular velocity of the beam direction is 0.064 rad/time-slot, the SNR is $\rho=10~$dB, and $M=8$. If each time-slot (TTI) lasts for 0.2ms (e.g., in 5G systems \cite{Pedersen20165g, Zong20165g}), Algorithm \ref{alg_1} can support an angular velocity of $0.064 \times {1000}/{0.2} = 320~\text{rad/s} \approx 51~\text{circles/s}$. {\blue Consider a TDMA pilot pattern where 1000 beams are tracked by the antenna array periodically in a round-robin fashion such that 1 pilot is sent in each time-slot. Note that these beams may come from either the same terminal or different terminals, and the base station can keep track of several different beams for each terminal. Algorithm \ref{alg_1} can support \textbf{0.32 rad/s} (or \textbf{18.33$^\circ$/s}) per beam for tracking all these 1000 beams, which is \textbf{72~mph} if the transmitters/reflectors steering these beams are at a distance of 100 meters. And when it is needed to track extremely fast mobiles, we can insert pilot symbols with higher density (or frequency) for each mobile. For example, in the 5G NR standard, the subcarrier spacing of mmWave band is 60 kHz and each symbol spans 17.84 $\upmu$s \cite{Zaidi2016Waveform}. At this symbol rate, the proposed algorithm can track a narrow beam rotating at an angular velocity of $2.05\times 10^5$ degrees/sec. This corresponds to a rotation frequency of 411 Hz, which is way beyond the need of beam tracking. Hence, the tracking speed can be very fast. }

\begin{figure}[!t]
\centering
\vspace{-2mm}
\includegraphics[width=6.5cm]{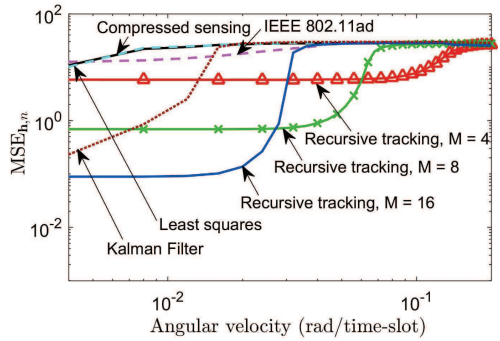}
\vspace{-3mm}
\caption{$\text{MSE}_{\textbf{h},n}$ vs. angular velocity in dynamic beam tracking, ${\rho} = 10~\text{dB}$.}
\vspace{-2mm}
\label{fig_dynamic_MSEvsV}
\end{figure}

\begin{figure}[!t]
\centering
\vspace{0mm}
\includegraphics[width=6.5cm]{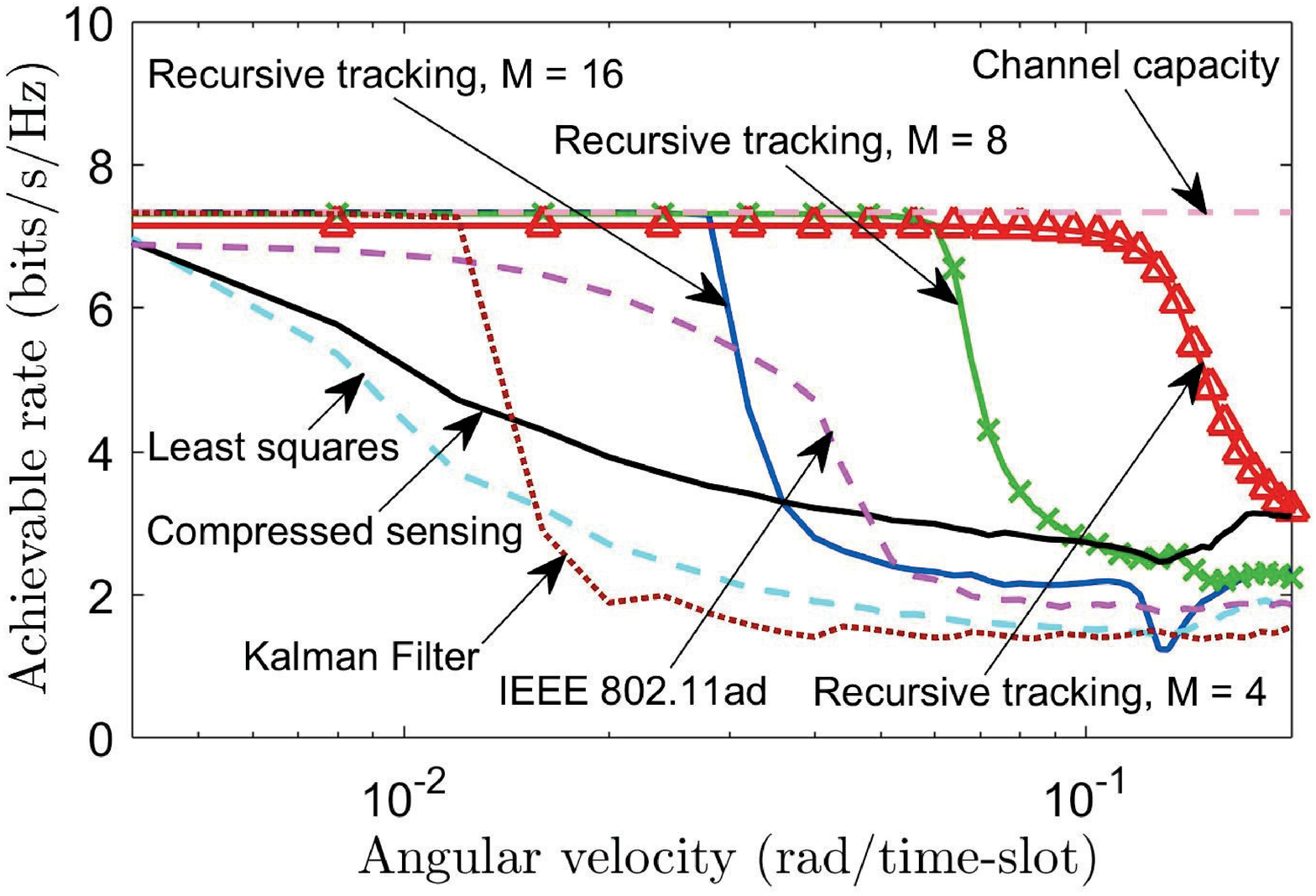}
\vspace{-3mm}
\caption{Achievable rate vs. angular velocity tradeoff in dynamic beam tracking,  ${\rho} = 10~\text{dB}$.}
\vspace{-4mm}
\label{fig_dynamic_RatevsV}
\end{figure}

In addition, we consider the condition that SNR is $\rho = 0~\text{dB}$ and other parameters are the same as Figs. \ref{fig_dynamic_MSEvsV} and \ref{fig_dynamic_RatevsV}. As depicted in Figs. \ref{fig_dynamic_MSEvsV_0dB} and \ref{fig_dynamic_RatevsV_0dB}, it can be seen that Algorithm \ref{alg_1} can provide higher performance gain than the condition that SNR is $\rho = 10~\text{dB}$, when all 16 antennas are used. Moreover, by using $M = 8$ antennas for tracking and all 16 antennas for data transmissions, the beam tracking regime of Algorithm \ref{alg_1} can still be enlarged. But when $M = 4$ antennas are used for tracking, the performance deterioration is quite significant due to the low antenna gain. Therefore, when SNR is low, more antennas are needed to ensure the good tracking performance.

\begin{figure}[!t]
\centering
\vspace{-0mm}
\includegraphics[width=6.5cm]{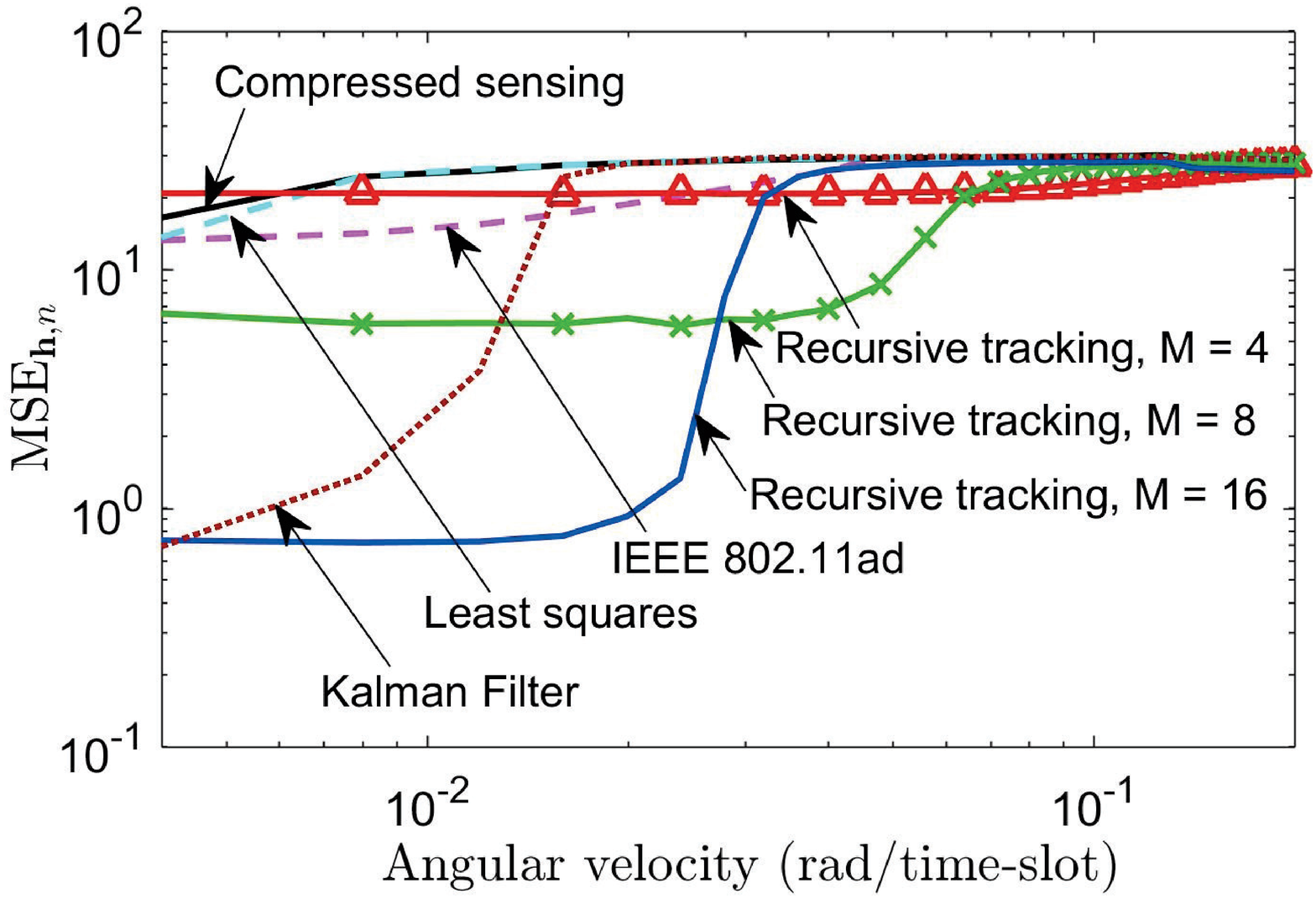}
\vspace{-3.5mm}
\caption{$\text{MSE}_{\textbf{h},n}$ vs. angular velocity in dynamic beam tracking, ${\rho} = 0~\text{dB}$.}
\vspace{-2mm}
\label{fig_dynamic_MSEvsV_0dB}
\end{figure}

\begin{figure}[!t]
\centering
\vspace{0mm}
\includegraphics[width=6.5cm]{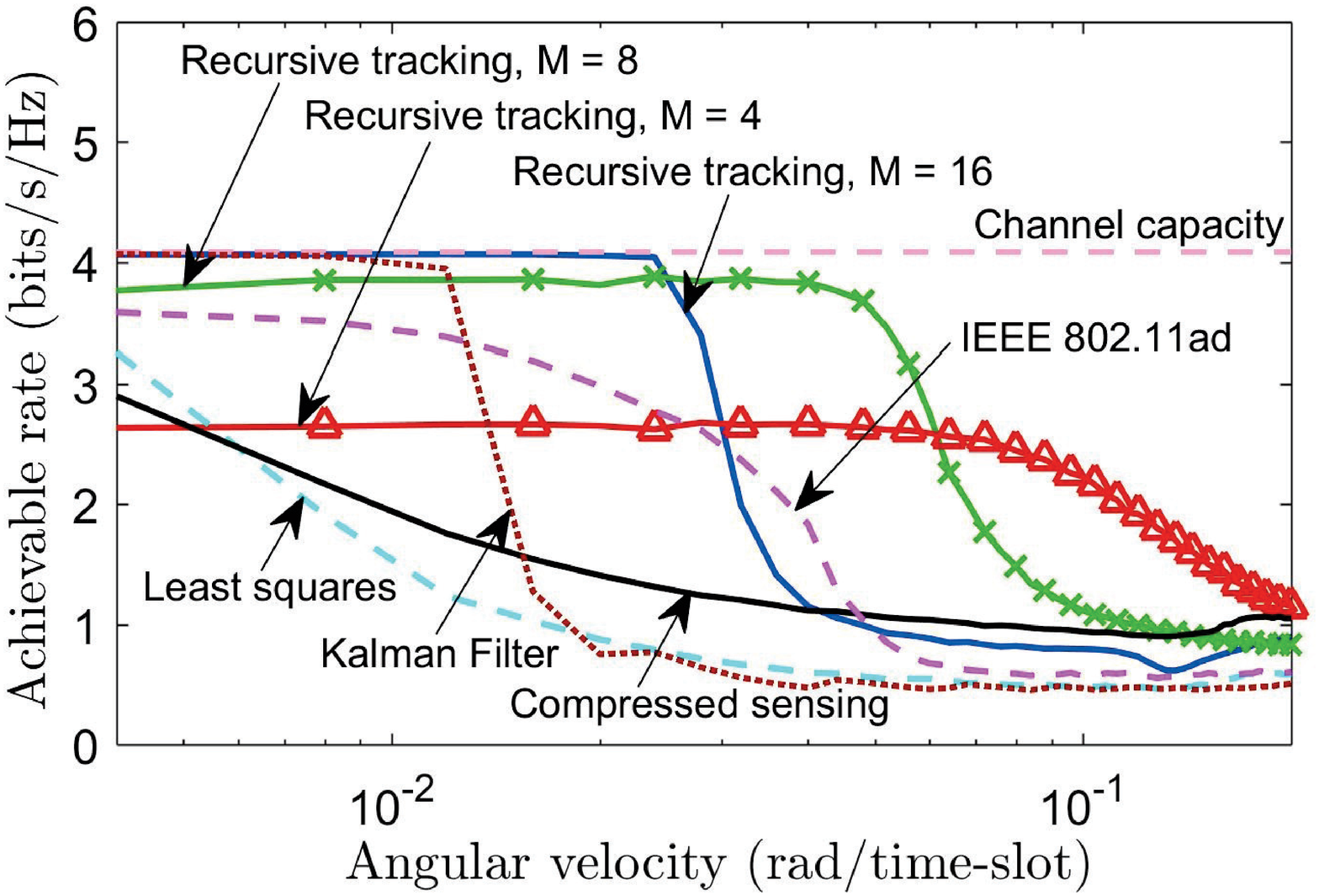}
\vspace{-3.5mm}
\caption{Achievable rate vs. angular velocity tradeoff in dynamic beam tracking, ${\rho} = 0~\text{dB}$.}
\vspace{-3mm}
\label{fig_dynamic_RatevsV_0dB}
\end{figure}

\begin{table}[!t]
\centering
\vspace{-0mm}
\caption{Maximum angular velocity (in {\blu \textnormal{degrees per second}}) for achieving $95\%$ of the channel capacity with different algorithms.}
\vspace{-2mm}
{\blue
\resizebox{0.425\textwidth}{!}{
\begin{tabular}{|c|c|c|c|c|c|c|}
\hline
{\tabincell{c}{\!\!\!SNR without \!\!\!\!\\ \!\!\! array gain  \!\!\!}} & {\tabincell{c}{\!\!\!Number of\!\!\!\! \\ antennas}} & {\tabincell{c}{\!\!\!Recursive \!\!\!\!\!\\ beam \\ tracking}} & {\tabincell{c}{\!\!Least\!\! \\ \!\!square\!\! \\ \cite{KaramiLS2007}}} & {\tabincell{c}{\!\!\!Compressed\!\!\! \\ sensing \\ \cite{Gao2015multi, Alkhateeb2015Compressed, Rial2016Hybrid}}} & {\tabincell{c}{IEEE \\ \!\!\!802.11ad\!\!\! \\ \cite{IEEE80211ad}}} & {\tabincell{c}{\!\!Kalman\!\! \\ filter \\ \cite{Va2016tracking}}} \\ \hline
\multirow{3}{*}{\!\!$10$~dB\!\!} & $M=8~$ & \textbf{18.33} & 4.13 & 2.29 & $-$ & $8.31$ \\ \cline{2-7}
& $M=32$ & {4.18} & 0.29 & 0.57 & 0.06 & $1.95$ \\ \cline{2-7}
& $M=128$ & {1.03} & 0.03 & 0.11 & 0.23 & $0.49$  \\ \hline
\multirow{3}{*}{\!\!$0$~dB\!\!} & $M=8~$ & \textbf{13.18} & $-$ & $-$ & $-$ & $6.02$ \\ \cline{2-7}
& $M=32$ & {3.84} & $-$ & $-$ & $-$ & $1.83$ \\ \cline{2-7}
& $M=128$ & {0.97} & $-$ & $-$ & $-$ & $0.46$ \\ \hline
\multirow{3}{*}{\!\!$-5$~dB\!\!} & $M=8~$ & $-$ & $-$ & $-$ & $-$ & $-$ \\ \cline{2-7}
& $M=32$ & {\textbf{2.98}} & $-$ & $-$ & $-$ & $0.29$ \\ \cline{2-7}
& $M=128$ & {0.92} & $-$ & $-$ & $-$ & $0.46$ \\ \hline\end{tabular} }
}
\begin{tablenotes}
        \footnotesize
        \item 1. The notation ``$-$'' denotes that the corresponding algorithm cannot achieve $95\%$ of the channel capacity even at zero angular velocity.
        \item 2. We assume that the SNR is the same for pilot training and data transmission, and 5 uniformly inserted pilot symbols per second are used for beam tracking.

\end{tablenotes}
\label{tab_comp}
\vspace{-6mm}
\end{table}

{\blue At last, the performance of these algorithms under different conditions is summarized in Table \ref{tab_comp}. We can observe that the maximum trackable angular velocity of Algorithm \ref{alg_1} to achieve 95\% capacity is much higher than those reference algorithms, and more importantly, when SNR is equal to or lower than 0~dB, Algorithm \ref{alg_1} still works well under most of the conditions, while the first three reference algorithms cannot meet the 95\% capacity requirement even if the moving speed is zero. Hence, the proposed algorithm can achieve a much faster beam tracking speed than the other algorithms, over a wide range of SNR values. }

\vspace{-2mm}
\section{Conclusions}\label{conclusion}
\vspace{-0.5mm}

We have developed an analog beam tracking algorithm, and established its convergence and asymptomatic optimality. Our theoretical and simulation results show that this algorithm can achieve faster tracking speed, lower beam tracking error, and higher data rate than several state-of-the-art algorithms. In our future work, we will consider hybrid beamforming systems with multiple RF chains, two-dimensional antenna arrays, and multi-path channel model with fast fading effects, based on the methodology developed in the current paper.

\ifreport
\appendices

{\bluee
\vspace{-2mm}
\section{Derivation of \eqref{eq_recursive_new}}\label{app_recursive_beam_tracker}
\vspace{-1mm}

By plugging \eqref{eq_est_ctrl} into \eqref{eq_newton_1}, we get
\vspace{-1mm}\begin{equation*}
\begin{aligned}
\frac{\partial \log p\left( y_n|\hat{x}_{n-1}, \mathbf{w}_n \right)}{\partial \hat{x}_{n-1}}
= &~2 \rho\operatorname{Re}\left\{ \left[y_n -  \mathbf{w}_n^\text{H}\mathbf{a}(\hat{x}_{n-1})\right]^\text{H} \cdot \mathbf{w}_n^\text{H}\frac{\partial \mathbf{a}(\hat{x}_{n-1})}{\partial \hat{x}_{n-1}}  \right\} \\
= &~2 \rho\operatorname{Re}\left\{ \left(y_n - \sqrt{M}\right)^\text{H} \cdot \frac{1}{\sqrt{M}} \left[\sum\limits_{m=1}^{M} -j\frac{2\pi d}{\lambda}(m\!-\!1)\right] \right\} \\
= &~2 \rho\operatorname{Re}\left\{ y_n^\text{H} \cdot \frac{1}{\sqrt{M}} \left[\sum\limits_{m=1}^{M} -j\frac{2\pi d}{\lambda}(m\!-\!1)\right] \right\} \\
= &~\frac{2\sqrt{M}(M-1)\pi d \rho}{\lambda}\cdot\operatorname{Re}\left\{ -j y_n^\text{H} \right\} = -\frac{2\sqrt{M}(M-1)\pi d \rho}{\lambda}\cdot\operatorname{Im}\left\{ y_n \right\}.
\end{aligned}\vspace{-1mm}
\end{equation*}

By plugging \eqref{eq_est_ctrl} into \eqref{eq_newton_2}, we can obtain
\vspace{-1mm}\begin{equation*}
\begin{aligned}
I(\hat{x}_{n-1}, \mathbf{w}_n) 
= &~\frac{2\rho}{M}\left| \sum\limits_{m=1}^{M} \frac{2\pi d}{\lambda}(m\!-\!1) e^{j\left[w_{mn}\!-\!\frac{2\pi d}{\lambda}(m\!-\!1)\hat{x}_{n-1}\right]} \right|^2 \\
= &~ \frac{2\rho}{M} \left| \sum\limits_{m=1}^{M} \frac{2\pi d}{\lambda}(m\!-\!1) \right|^2= \frac{2M(M-1)^2\pi^2 d^2 \rho}{\lambda^2}.
\end{aligned}\vspace{-1mm}\end{equation*}
Then, the update part in \eqref{eq_newton} is calculated as below
\begin{equation*}\begin{aligned}
\left.{\frac{\partial \log p\left( y_n|\hat{x}_{n-1}, \mathbf{w}_n \right)}{\partial \hat{x}_{n-1}}}\middle/{I(\hat{x}_{n-1}, \mathbf{w}_n)} = -\frac{\lambda}{\sqrt{M}(M-1)\pi d }\cdot\operatorname{Im}\left\{ y_n \right\} \right.,
\end{aligned}\end{equation*}
which leads to \eqref{eq_recursive_new}.

}

\vspace{-2mm}
\section{Proof of Theorem~\ref{th_convergence}}\label{proof_converge}
\vspace{-1mm}

Theorem 5.2.1 in \cite[Section 5.2.1]{kushner2003stochastic} provided the sufficient conditions under which $\hat{x}_n$ converges to a unique point within a set of stable points with probability one. We will prove that when the step-size $a_n$ is given by (\ref{eq_stepsize}) with any $\alpha > 0$ and $N_0 \ge 0$, our algorithm satisfies its sufficient conditions below:
\begin{itemize}
\item[1)] {\blue Step-size requirements: We can verify that $\lim\limits_{n\rightarrow \infty} a_n = 0$, $\sum\limits_{n=1}^\infty a_n = \infty$, and $\sum\limits_{n=1}^\infty a_n^2  < \infty$.}

\item[2)] It is needed to prove that $\sup\nolimits_n \mathbb{E} \left[ \left|-\operatorname{Im}\left\{ y_{n} \right\}\right|^2 \right] < \infty$. \vspace{0.5mm}
\item[] {\blue From (\ref{eq_gx}), for all $n \ge 1$, we have\vspace{-1mm}
\begin{equation}\vspace{-1mm}\label{eq_expectation_yn}\begin{aligned} \mathbb{E} \left[ \left|-\operatorname{Im}\left\{ y_{n} \right\}\right|^2 \right] 
= &~\mathbb{E} \left[ \left|f(\hat{x}_{n-1},x)\right|^2 + 2 f(\hat{x}_{n-1},x) \hat{z}_n + \hat{z}_n^2 \right] \\ \overset{(a)}{=} &~ \mathbb{E} \left[ \left|f(\hat{x}_{n-1},x)\right|^2 \right] + \frac{1}{2\rho} \\
\overset{(b)}{=} &~ \mathbb{E} \left[ \left|\frac{1}{\sqrt{M}}\sum_{m=1}^M e^{j\frac{2\pi d}{\lambda}(m-1)(\hat{x}_{n-1} - x)}\right|^2 \right] + \frac{1}{2\rho} 
\le M + \frac{1}{2\rho} < \infty, \end{aligned}\end{equation}
\item[] where $\hat{z}_n \overset{\Delta}{=} -{\operatorname{Im}\left\{{z}_n\right\}}/{\sqrt{\rho}} \sim \mathcal{N}\left( 0, {1}/({2\rho})\right)$, Step $(a)$ is due to that $\hat{z}_n$ is independent of $f(\hat{x}_{n-1}, x)$, and Step $(b)$ uses \eqref{eq_fx}. Hence, we get $\sup\nolimits_n \mathbb{E} \left[ \left|-\operatorname{Im}\left\{ y_{n} \right\}\right|^2 \right] < \infty$. } \vspace{0.5mm}

\item[3)] The function $f(v, x)$ should be continuous w.r.t. $v$.\vspace{0.5mm}
\item[] From \eqref{eq_fx}, $f(v, x)$ can be rewritten as follows:\vspace{-2mm}
\begin{equation*}\vspace{-2mm}
f(v, x) = - \frac{1}{\sqrt{M}}\sum_{m=1}^M \sin\left[\frac{2\pi d}{\lambda}(m-1)(v - x)\right].
\end{equation*}
Because $\sin\left[\frac{2\pi d}{\lambda}(m-1)(v - x)\right]$ is continuous w.r.t. $v$, and $f(v, x)$ is the summation of a finite amount of $\sin\left[\frac{2\pi d}{\lambda}(m-1)(v - x)\right], m = 1,\ldots, M$. Therefore, we can conclude that $f(v, x)$ is continuous w.r.t. $v$.
\vspace{0.5mm}
\item[4)] {\blue Let $\{ \mathcal{G}_n: n \ge 0 \}$ be an increasing sequence of $\sigma$-fields of $\left\{ \hat{x}_0, \hat{z}_{1}, \hat{z}_{2}, \ldots \right\}$, i.e., $\mathcal{G}_{n-1} \subset \mathcal{G}_n$, where $\mathcal{G}_0 \overset{\Delta}{=} \sigma(\hat{x}_0)$ and $\mathcal{G}_n \overset{\Delta}{=} \sigma(\hat{x}_0, \hat{z}_{1}, \ldots, \hat{z}_{n}) $ for $n \ge 1$. Define $\gamma_n \overset{\Delta}{=} \mathbb{E} \left[ \left.-\operatorname{Im}\left\{ y_{n} \right\}\right| \mathcal{G}_{n-1} \right] - f(\hat{x}_{n-1}, x)$. It is needed to prove that $\sum_{n=1}^\infty \left| a_n \gamma_n \right| < \infty$ with probability one.\vspace{0.5mm}
\item[] Since the $\hat{z}_n$'s are \emph{i.i.d.} circularly symmetric complex Gaussian random variables with zero mean, $\hat{z}_n$ is independent of $\mathcal{G}_{n-1}$, and $\hat{x}_{n-1} \in \mathcal{G}_{n-1}$, we have\vspace{-1mm}
\begin{align}\label{eq_fil2}
	\mathbb{E} \left[ \left. -\operatorname{Im}\left\{ y_{n} \right\} \right| \mathcal{G}_{n-1} \right] =&~ \mathbb{E} \left[ \left. f(\hat{x}_{n-1}, x) + \hat{z}_n \right| \mathcal{G}_{n-1} \right] \\
	= &~\mathbb{E} \left[ \left. f(\hat{x}_{n-1}, x)\right| \mathcal{G}_{n-1} \right] + \mathbb{E} \left[ \left. \hat{z}_n \right| \mathcal{G}_{n-1} \right]	= f(\hat{x}_{n-1}, x) \nonumber,
\end{align}
for $n \ge 1$. Hence, we can get $\gamma_n = 0$ for all $n \ge 1$ and $\sum_{n=1}^\infty \left| a_n \gamma_n \right| = 0 < \infty$ with probability one.}
\vspace{0.5mm}
\item[5)] The set of stable points for \eqref{eq_ODE} should be obtained.\vspace{0.5mm}
\item[] According to \eqref{eq_stable_points}, $\mathcal{S}(x)$ contains the local optimal stable points of the ODE \eqref{eq_ODE}. Also, the boundary point 1 (or $-1$) is a stable point when $f(1, x) \ge 0$ (or $f(-1, x) \le 0$). Hence, the set of stable points is $\mathcal{S}(x) \cup \{ -1\} \cup \{1 \}$.
\end{itemize}

By Theorem 5.2.1 in \cite{kushner2003stochastic}, $\hat{x}_n$ converges to a unique point within $\mathcal{S}(x) \cup \{ -1\} \cup \{1 \}$ with probability one. 

\vspace{-3mm}
\section{Proof of Theorem~\ref{th_lock}}\label{proof_lock}
\vspace{-1mm}

Theorem \ref{th_lock} is proven in three steps:

{\blue \noindent\emph{\textbf{Step 1:} Two continuous processes are constructed based on the discrete process $\left\{ \hat{x}_n \right\}$.}  

The first continuous process $\bar{x}(t), t \ge 0$ is the linear interpolation of the sequence $\{ \hat{x}_n \}$, where $\bar{x}(t_n) = \hat{x}_n, n \ge 0$ and $\bar{x}(t)$ is given by\vspace{-1.75mm}
\begin{equation}\vspace{-2.75mm}\label{eq_continuous}
\begin{aligned}
	\bar{x}(t)\!=\!\bar{x}(t_n)\!+\!\frac{(t\!-\!t_n)\left[\bar{x}(t_{n+1})\!-\!\bar{x}(t_n)\right]}{a_{n+1}}, t\!\in\![t_n, t_{n+1}],
\end{aligned}
\end{equation}
\vspace{-0.5mm}where $t_n$ is the discrete time parameter defined by $t_0 \overset{\Delta}{=} 0$, $t_n \overset{\Delta}{=} \sum_{i=1}^n a_{i}$, $n \ge 1$.

The second continuous process $\tilde{x}^n(t)$ is a solution of the ODE (\ref{eq_ODE}) for $t \in [t_n, \infty)$, where $\tilde{x}^n(t_n) = \bar{x}(t_n) = \hat{x}_n, n \ge 0$. Since we only care about the condition that $\hat{x}_n \in \mathcal{B}(x)$, the projection operation in the ODE (\ref{eq_ODE}) will not take effect and we can remove it\footnote{\blue There exist two cases: (i) if $\pm 1 \notin \mathcal{B}(x)$, then the solution of the ODE (\ref{eq_ODE}) is within $(-1, 1)$, (ii) if 1 (or $-1$) is in $\mathcal{B}(x)$, then $f(1, x) \le 0$ (or $f(-1, x) \ge 0$), hence the solution will not cross the boundary $\pm 1$.}. Then, we have $\frac{d\tilde{x}^n(t)}{dt} = f(\tilde{x}^n(t),x) $ and\vspace{-1mm}
\begin{equation}\vspace{-1mm}\label{eq_ODE_new}
\begin{aligned}
	\tilde{x}^n(t) & = \bar{x}(t_n) + \int_{t_n}^t f(\tilde{x}^n(v),x) dv, t \ge t_n.
\end{aligned}
\end{equation}

\noindent\emph{\textbf{Step 2:} Using the continuous processes $\bar{x}(t)$ and $\tilde{x}^n(t)$, a sufficient condition for the convergence of the discrete process $\left\{ \hat{x}_n \right\}$ is established.}

Let $\mathcal{I}$ be an invariant set that contains the real direction $x$ and is in the mainlobe, i.e., $x \in \mathcal{I} \subset \mathcal{B}(x)$.} Pick $\delta$ such that\footnote{The boundary of the set $\mathcal{B}(x)$ is denoted by $\partial \mathcal{B}(x)$.}\vspace{-1.5mm}
\begin{equation}\vspace{-0.5mm}\label{eq_delta_choice}
\inf_{v \in \partial \mathcal{B}(x)} \left| v - \hat{x}_0 \right| > \delta > 0.
\end{equation}
Then, the invariant set $\mathcal{I}$ can be constructed as follows:\vspace{-1mm}
\begin{equation}\vspace{-1mm}
	\mathcal{I} = \Big( x - |x - \hat{x}_0| - \delta,~x + |x - \hat{x}_0| + \delta \Big) \subset \mathcal{B}(x).
\end{equation}
An example of the invariant set $\mathcal{I}$ is illustrated in Fig. \ref{fig_invariant_set}.

\begin{figure}[!t]
\vspace{-7mm}
\centering
\includegraphics[width=4.5cm]{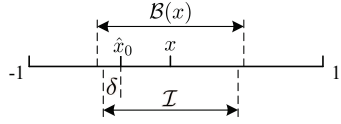}
\vspace{-3mm}
\caption{An illustration of the invariant set $\mathcal{I}$.}
\label{fig_invariant_set}
\vspace{-5mm}
\end{figure}

Next, we will establish a sufficient condition that ensures $\hat{x}_n\!\in\!\mathcal{I}~\text{for}~n\!\ge\!0$, and hence from Theorem 2.1 in \cite{borkar2002lock}, we can obtain that $\{\hat{x}_n\}$ converges to $x$.
Before giving the sufficient condition, let us provide some useful definitions first:
\begin{itemize}
\item Pick $T > 0$ such that the solution $x(t), t\ge 0$ of the ODE (\ref{eq_ODE}) with $x(0)\!=\!\hat{x}_0$ satisfies $\inf_{v \in \partial \mathcal{I}}\left| v\!-\!x(t) \right| > 2\delta$ for $t \ge T$. Since the solution $x(t)$ of the ODE (\ref{eq_ODE}) will approach the real direction $x$ monotonically within the mainlobe $\mathcal{B}(x)$ as time $t$ increases, we have a sufficient condtion of $\inf_{v \in \partial \mathcal{I}}\left| v\!-\!x(t) \right| > 2\delta, t \ge T$ as $\left| \hat{x}_0\!-\!x(T) \right| \ge \delta$, hence one possible $T$ is given by\vspace{-1mm}
\begin{equation}\vspace{-1mm}\label{eq_T}
T=\frac{\delta}{\min \left\{|f(\hat{x}_0, x)|, |f(|\hat{x}_0\!-\!x|\!-\!\delta\!+\!x, x)|\right\}},
\end{equation}
where the denominator is to obtain the minimum absolute gradient for $t \in [0, T]$.

\item Let $T_0 \overset{\Delta}{=} 0$ and $T_{m+1} \overset{\Delta}{=} \min \left\{ t_i: t_i \ge T_n + T, i \ge 0 \right\}$ for $m \ge 0$. Then $T_{m+1} - T_m \in [T, T+a_1]$ and $T_m = t_{\tilde{n}(m)}$ for some $\tilde{n}(m) \rightarrow \infty$, where $\tilde{n}(0) = 0$. Let $\tilde{x}^{\tilde{n}(m)}(t)$ denote the solution of ODE (\ref{eq_ODE}) for $t \in I_m \overset{\Delta}{=} \left[ T_m, T_{m+1} \right]$ with $\tilde{x}^{\tilde{n}(m)}(T_m) = \bar{x}(T_m)$, $m \ge 0$.
\end{itemize}
Then, we can obtain the following lemma:
\begin{lemma}\label{le_sufficient}
If $\underset{t\in I_m}{\sup} \left| \bar{x}(t) - \tilde{x}^{\tilde{n}(m)}(t)\right| \le \delta$ for all $m \ge 0$, then $\hat{x}_n \in \mathcal{I}~\text{for all}~n \ge 0$.
\end{lemma}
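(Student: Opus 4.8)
The plan is to reduce the discrete claim to a statement about the interpolated trajectory and then run an induction over the restart windows $I_m$. Since $\hat{x}_n = \bar{x}(t_n)$ and every $t_n$ lies in some window $I_m$, it suffices to show $\bar{x}(t) \in \mathcal{I}$ for all $t \ge 0$. The engine of the whole argument is a single analytic property of the ODE \eqref{eq_ODE} restricted to the mainlobe: because $\mathcal{I} \subset \mathcal{B}(x)$ and $x$ is the unique stable point there, the map $f(\cdot,x)$ of \eqref{eq_fx} has the sign of $x-v$ throughout $\mathcal{I}$, so each restarted solution $\tilde{x}^{\tilde{n}(m)}(t)$ of \eqref{eq_ODE_new} moves \emph{monotonically} toward $x$. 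Hence $|\tilde{x}^{\tilde{n}(m)}(t) - x|$ is nonincreasing, $\mathcal{I}$ is forward-invariant, and a solution that starts at distance $< R - \delta$ from $x$ (writing $R = |x - \hat{x}_0| + \delta$ for the radius of $\mathcal{I}$) stays more than $\delta$ away from $\partial\mathcal{I}$ for all later times. I would also note, following the footnote after \eqref{eq_ODE_new}, that no solution escapes through $\pm 1$ because $\mathcal{I}\subset\mathcal{B}(x)$ makes the projection in \eqref{eq_ODE} inactive, so the flow is genuinely just $\dot{\tilde{x}} = f(\tilde{x},x)$.

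First I would set up the induction hypothesis: for $m \ge 1$, $\bar{x}(T_m)$ lies more than $\delta$ inside $\mathcal{I}$ (distance $< R - \delta$ from $x$), while the base point $\bar{x}(T_0) = \hat{x}_0$ sits exactly $\delta$ inside by construction of $\mathcal{I}$ and the choice \eqref{eq_delta_choice}. On a generic window $I_m$ the restarted solution starts at $\bar{x}(T_m)$ and, by the contraction property above, stays more than $\delta$ inside $\mathcal{I}$ throughout $I_m$; combining this with the standing hypothesis $\sup_{t\in I_m}|\bar{x}(t) - \tilde{x}^{\tilde{n}(m)}(t)| \le \delta$ and the triangle inequality gives $\bar{x}(t) \in \mathcal{I}$ for every $t \in I_m$, which already yields the conclusion for the iterates falling in that window. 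To propagate the hypothesis I invoke the choice of $T$ in \eqref{eq_T}: since each window has real-time length at least $T$, the restarted ODE acquires a full $2\delta$ of interior margin by time $T_{m+1}$, so $\tilde{x}^{\tilde{n}(m)}(T_{m+1})$ lies more than $2\delta$ inside $\mathcal{I}$; absorbing the $\le\delta$ interpolation error then leaves $\bar{x}(T_{m+1})$ more than $\delta$ inside, closing the induction. The base window $I_0$ is handled identically, using that the solution starts exactly $\delta$ inside and strictly contracts.

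The hard part will be the margin bookkeeping, namely showing that the \emph{single} window length $T$ suffices uniformly at every restart. The interpolation error budget is $\delta$, so each window must manufacture at least $2\delta$ of fresh interior margin in order to end with a net $\delta$ margin that seeds the next window — this is exactly the gap $R-\delta \to R-2\delta$ that $T$ in \eqref{eq_T} is calibrated to bridge. Justifying that \eqref{eq_T} is a valid such time requires a uniform lower bound on the flow speed $|f(\cdot,x)|$ over the relevant shell of points at distance between $R-2\delta$ and $R-\delta$ from $x$; here I would use that $|f(\cdot,x)|$ is \emph{unimodal} in the distance to $x$ across the mainlobe (the single-lobe pattern shape visible in Fig.~\ref{fig_imagpart}), so its minimum over any sub-shell is attained at one of the two endpoints appearing in the denominator of \eqref{eq_T}, with the two sides identified by the symmetry $|f(x+s,x)| = |f(x-s,x)|$. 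Once this lower bound is in place the induction becomes routine, and the lemma's conclusion $\hat{x}_n \in \mathcal{I}$ is then fed, in the surrounding proof of Theorem~\ref{th_lock}, into the invariant-set convergence result of \cite{borkar2002lock, borkar2008stochastic}.
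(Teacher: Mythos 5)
Your proposal is correct and follows essentially the same route as the paper's proof of Lemma~\ref{le_sufficient}: an induction over the windows $I_m$ with hypothesis that $\bar{x}(T_m)$ lies within distance $|x-\hat{x}_0|$ of $x$, using the monotone approach of the restarted ODE solution toward $x$ inside $\mathcal{B}(x)$, the choice of $T$ in \eqref{eq_T} to gain the extra $\delta$ of margin per window, and the triangle inequality against the $\le\delta$ interpolation error. The paper merely writes the same margin bookkeeping as explicit inequality chains split into the two symmetric cases $\hat{x}_0<x$ and $\hat{x}_0>x$, and, like you, defers the justification of \eqref{eq_T} to the unimodality of $|f(\cdot,x)|$ on the half-mainlobe.
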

\begin{proof}
\ifreport
See Appendix \ref{sec_proof_le_sufficient}
\else
See Appendix F in our technical report \cite{Li2017analog}. 
\fi
\end{proof}

{\blue
\noindent\emph{\textbf{Step 3:} Based on the sufficient condition in Lemma \ref{le_sufficient}, a lower bound for $P\left( \left. \hat{x}_n\!\rightarrow\!x \right| \hat{x}_0\!\in\!\mathcal{B}\left(x\right) \right)$ is derived. }

By deriving the lower bound of probability for the sufficient condition $\underset{t\in I_m}{\sup} \left| \bar{x}(t) - \tilde{x}^{\tilde{n}(m)}(t)\right| \le \delta, \forall m \ge 0$ in Lemma \ref{le_sufficient}, we get the following lemma:
\begin{lemma}\label{le_lower_bound}
If (i) the initial point satisfies $\hat{x}_0 \in \mathcal{B}(x)$, (ii) $a_n$ is given by (\ref{eq_stepsize}) with any $\alpha > 0$,
then there exist  $N_0 \ge 0$ and $C_0>0$ such that\vspace{-1mm}
\begin{equation}\vspace{-1mm}\label{eq_lock}
P\left( \hat{x}_n \in \mathcal{I}, \forall n \ge 0 \right) \ge 1- 2e^{-C_0\cdot\frac{\rho}{\alpha^2}}.
\end{equation}
\end{lemma}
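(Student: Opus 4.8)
The plan is to bound the probability of the complementary ``escape'' event and then invoke Lemma~\ref{le_sufficient}. By that lemma, the event $\{\hat{x}_n \in \mathcal{I},\ \forall n \ge 0\}$ is implied by $\mathcal{E} = \{\sup_{t \in I_m} |\bar{x}(t) - \tilde{x}^{\tilde{n}(m)}(t)| \le \delta,\ \forall m \ge 0\}$, so it suffices to prove $P(\mathcal{E}^c) \le 2e^{-C_0 \rho/\alpha^2}$. The first step is to reduce the per-window deviations to the fluctuation of a single martingale. On each interval $I_m$ both $\bar{x}$ and $\tilde{x}^{\tilde{n}(m)}$ start at $\bar{x}(T_m)$, and subtracting \eqref{eq_ODE_new} from the piecewise-constant-drift representation of $\bar{x}$ yields an identity of the form $\bar{x}(t) - \tilde{x}^{\tilde{n}(m)}(t) = \int_{T_m}^{t}\!\left[f(\bar{x}(\underline{s}),x) - f(\tilde{x}^{\tilde{n}(m)}(s),x)\right] ds + (\zeta_{k(t)} - \zeta_{\tilde{n}(m)})$, where $\zeta_k \overset{\Delta}{=} \sum_{i=1}^k a_i \hat{z}_i$ is the accumulated noise and $\underline{s}$ denotes the left endpoint of the step containing $s$.

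Since $f(\cdot, x)$ is continuously differentiable on the compact mainlobe, it is Lipschitz there with some constant $L$; applying the discrete Gronwall inequality to this identity gives $\sup_{t \in I_m}|\bar{x}(t) - \tilde{x}^{\tilde{n}(m)}(t)| \le e^{L(T+a_1)}\bigl(\varepsilon_{\mathrm{dis}}(m) + \max_{\tilde{n}(m) \le k \le \tilde{n}(m+1)}|\zeta_k - \zeta_{\tilde{n}(m)}|\bigr)$, where $\varepsilon_{\mathrm{dis}}(m)$ collects the discretization error $|f(\bar{x}(\underline{s}),x) - f(\bar{x}(s),x)|$ and is bounded by $L\|f\|_\infty \sum_{i \in I_m} a_i^2$ plus a higher-order stochastic term. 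Writing $C_T = e^{L(T+a_1)}$ and using $|\zeta_k - \zeta_{\tilde{n}(m)}| \le 2\sup_{j \ge 0}|\zeta_j|$ for every $k$ and every window, the event $\mathcal{E}^c$ is contained in $\{\,2C_T \sup_j |\zeta_j| + C_T \sup_m \varepsilon_{\mathrm{dis}}(m) > \delta\,\}$.

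The second step handles $N_0$ and the martingale tail. Because $\sum_{i \in I_m} a_i^2 \le \sum_{i \ge 1} a_i^2 = \alpha^2 \sum_{i\ge 1}(i+N_0)^{-2}$, which is $O(\alpha^2/N_0)$, I would choose $N_0$ large enough that $C_T \sup_m \varepsilon_{\mathrm{dis}}(m) \le \delta/2$ (this is precisely the source of the ``typical $N_0 = 10$--$50$'' condition quoted after Theorem~\ref{th_lock}); the remaining budget forces $\sup_j |\zeta_j| > \delta/(4C_T)$ on $\mathcal{E}^c$. Now $\zeta_k$ is a zero-mean martingale with respect to the filtration $\{\mathcal{G}_n\}$ of Appendix~\ref{proof_converge}, with Gaussian increments $a_i \hat{z}_i \sim \mathcal{N}(0, a_i^2/(2\rho))$, so $\exp(\eta \zeta_k - \tfrac{\eta^2}{4\rho}\sum_{i \le k} a_i^2)$ is a supermartingale for every $\eta$. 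Doob's maximal inequality, optimized over $\eta$ and applied to both $\pm \zeta_k$, gives $P(\sup_j |\zeta_j| > \varepsilon) \le 2\exp\bigl(-\rho\varepsilon^2/\sum_{i\ge 1} a_i^2\bigr)$. Substituting $\varepsilon = \delta/(4C_T)$ and $\sum_i a_i^2 = \alpha^2 S$ with $S = \sum_{i\ge1}(i+N_0)^{-2}$ produces $P(\mathcal{E}^c) \le 2\exp(-C_0 \rho/\alpha^2)$ with $C_0 = \delta^2/(16 C_T^2 S) > 0$, which is the claim.

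The main obstacle is the uniform-in-$m$ control: a naive union bound over the infinitely many windows would destroy the exponent, so the crucial manoeuvre is to absorb every per-window fluctuation into the single global quantity $\sup_j|\zeta_j|$ via $|\zeta_k - \zeta_{\tilde{n}(m)}| \le 2\sup_j|\zeta_j|$, after which one concentration inequality suffices. The second delicate point is the interplay of $N_0$ with the two error terms: increasing $N_0$ shrinks both the deterministic discretization error $\varepsilon_{\mathrm{dis}}$ and $S$, so a single sufficiently large $N_0$ simultaneously makes the deterministic error negligible and enlarges $C_0$; verifying that the higher-order stochastic part of $\varepsilon_{\mathrm{dis}}$ (of order $\sum_i a_i^2 |\hat{z}_i|$) is dominated by the same martingale budget is the remaining technical chore.
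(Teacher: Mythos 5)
Your proof is correct and arrives at the same bound, but it departs from the paper's argument at the one step that genuinely matters: how the deviation is controlled uniformly over the infinitely many windows $I_m$. The paper \emph{does} take a union bound over windows (via the conditional decomposition of Lemma 4.2 in \cite{borkar2008stochastic}), applies the Gaussian maximal inequality of Lemma~\ref{le_cheb} separately on each window with variance $\big[b(\tilde n(m))-b(\tilde n(m+1))\big]/(2\rho)$, and then tames the resulting infinite sum $\sum_m \exp\big\{-C/[b(\tilde n(m))-b(\tilde n(m+1))]\big\}$ by the monotonicity of $v\mapsto v^{-1}e^{-C/v}$ (Lemma~\ref{le_sum}) together with the telescoping identity $\sum_m[b(\tilde n(m))-b(\tilde n(m+1))]=b(0)$; so, contrary to your worry, the per-window union bound does not destroy the exponent, precisely because the per-window variances shrink. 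You instead absorb every window increment into the single global quantity $\sup_j|\zeta_j|$ via $|\zeta_k-\zeta_{\tilde n(m)}|\le 2\sup_j|\zeta_j|$ and invoke one maximal inequality with the total variance $\alpha^2 S/(2\rho)$. This is valid and arguably cleaner --- it dispenses with Lemma~\ref{le_sum} entirely --- at the cost of a factor of $4$ in $C_0$ ($16$ versus $4$ in the denominator), which is immaterial to the statement. Two points to tighten: (i) the containment of $\mathcal{E}^c$ in your martingale event tacitly uses a ``first bad window'' argument --- on all earlier windows the deviation is at most $\delta$, so the iterates remain in $\mathcal{I}$, the projection in \eqref{eq_est} never acts, and the unprojected recursion (hence the Gronwall bound) is legitimate on the offending window; this is exactly what the paper's conditioning accomplishes and should be made explicit; (ii) your ``higher-order stochastic term'' in $\varepsilon_{\mathrm{dis}}$ does not arise if, as in the paper, Gronwall is run on the grid points only and the interpolation between grid points is handled separately, where it contributes only the deterministic $\sqrt{M}a_{n+1}/2$ term appearing in \eqref{eq_gronwall2}.
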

\begin{proof} 
\ifreport
See Appendix \ref{sec_proof_le_lower_bound}.
\else
See Appendix G in our technical report \cite{Li2017analog}. 
\fi
\end{proof}

Applying Lemma \ref{le_lower_bound} and Theorem 2.1 in \cite{borkar2002lock}, we can obtain
\begin{align}\label{eq_lock10}
	P\left( \left. \hat{x}_n \rightarrow x \right| \hat{x}_0 \in \mathcal{B} \right) \ge & P\left( \hat{x}_n \in \mathcal{I}, \forall n \ge 0 \right) 
	\ge 1 - 2e^{-C_0\cdot\frac{\rho}{\alpha^2}},
\end{align}
which completes the proof of Theorem \ref{th_lock}.
}

\section{Proof of Theorem~\ref{th_normal}}\label{proof_normal}

When the step-size $a_n$ is given by (\ref{eq_stepsize}) with any $\alpha > 0$ and $N_0 \ge 0$, Theorem 6.6.1 in \cite{nevel1973stochastic} has proposed the sufficient conditions to prove the asymptotic normality of $\hat{x}_n$, i.e., $\sqrt{n+N_0} \left( \hat{x}_n - x \right) \overset{d}{\rightarrow}\mathcal{N}\left( 0, \Sigma \right)$. Under the condition that $\hat{x}_n \rightarrow x$, we will prove that our algorithm satisfies its sufficient conditions and obtain the variance $\Sigma$ as follows:
\begin{itemize}
\item[1)] The estimate $\hat{x}_n$ should be within $[-1, 1]$.\vspace{0.1mm} \\
The projection operator in \eqref{eq_est} ensures that $\hat{x}_n \in [-1, 1]$.
\item[2)] Equation \eqref{eq_gx} should satisfy: (i) there exist an increasing sequence of $\sigma$-fields $\{\mathcal{F}_{n}: n \ge 0\}$ such that $\mathcal{F}_{m} \subset \mathcal{F}_{n}$ for $m < n$, and (ii) the random noise $\hat{z}_n$ is $\mathcal{F}_{n}$-measurable and independent of $\mathcal{F}_{n-1}$.\vspace{0.25mm} \\
As defined in Appendix \ref{proof_converge}, there exists an increasing sequence of $\sigma$-fields $\{ \mathcal{G}_n : n \ge 0 \}$, such that $\hat{z}_n$ is measurable w.r.t. $\mathcal{G}_{n}$, i.e., $\mathbb{E} \left[ \left. \hat{z}_n \right| \mathcal{G}_{n} \right] = \hat{z}_n$, and is independent of $\mathcal{G}_{n-1}$, i.e., $\mathbb{E} \left[ \left. \hat{z}_n \right| \mathcal{G}_{n-1}\right]  = \mathbb{E} \left[ \hat{z}_n \right] = 0$.
\item[3)] $\hat{x}_n$ should converge to $x$ almost surely as $n \rightarrow \infty$. \vspace{0.5mm} \\
Since $\hat{x}_n \rightarrow x$ is assumed, we have that $\hat{x}_n$ converges to $x$ almost surely as $n \rightarrow \infty$. \vspace{0.25mm}

\item[4)] {\blue The stable condition: \vspace{0.5mm}\\
From \eqref{eq_fx}, $f(v, x)$ can be rewritten as follows:\vspace{-1mm}
\begin{equation*}\vspace{-1mm}
\begin{aligned}
f(v, x) = & -\frac{\sin\left[ \frac{(M-1)\pi d}{\lambda} \left( v - x \right)\right]  \sin\left[ \frac{M\pi d}{\lambda} \left( v  - x \right)\right]}{\sqrt{M} \sin\left[ \frac{\pi d}{\lambda} \left(v - x \right) \right] } 
= c_1 \left( v - x \right) + o\left(v - x\right),
\end{aligned}
\end{equation*}
where $c_1$ is given by\vspace{-1mm}
\begin{equation*}\vspace{-1mm}
\begin{aligned}
c_1 = \left.\frac{\partial f(v, x)}{\partial v}\right|_{v = x} = - \frac{\sqrt{M}(M-1)\pi d}{\lambda}.
\end{aligned}
\end{equation*}
Then, we get the stable condition that\vspace{-1mm}
\begin{equation*}\vspace{-1mm}
A = c_1\alpha + \frac{1}{2} = -\frac{\sqrt{M}(M-1)\pi d \alpha}{\lambda}  + \frac{1}{2} < 0,
\end{equation*}
which results in $\alpha > \frac{\lambda}{2\sqrt{M}(M-1)\pi d}$.} \vspace{0.25mm}

\item[5)] The constraints for the random noise:\vspace{-1mm}
\begin{equation*}
\mathbb{E}\left[\left(\hat{z}_n\right)^2\right] = \frac{1}{2\rho} < \infty,
\end{equation*}
and\vspace{-1mm}
\begin{equation*}\vspace{-1mm}
\underset{V\rightarrow\infty}{\lim}\ \ \underset{n\ge1}{\sup}\ \ \int\limits_{\left| \hat{z}_n \right| > V} \left| \hat{z}_n \right|^2 p(\hat{z}_n) d\hat{z}_n = 0.
\end{equation*}

\end{itemize}
Hence, by Theorem 6.6.1 in \cite{nevel1973stochastic}, we have\vspace{-1mm}
\begin{equation*}\vspace{-1mm}\sqrt{n+N_0}\left( \hat{x}_n - x \right) \overset{d}{\rightarrow}\mathcal{N}\left( 0, \Sigma \right), \end{equation*}
where\vspace{-1mm}
\begin{equation}\vspace{-1mm}\label{eq_Sigma}\begin{aligned}
\Sigma = \alpha^2 \mathbb{E}\left[\left(\hat{z}_n\right)^2\right] \cdot \int_0^\infty e^{2Av} dv
= \frac{\alpha^2}{2\rho \left( \frac{2\sqrt{M}(M-1)\pi d \alpha}{\lambda} - 1 \right)}.\\ 
\end{aligned}\vspace{-2mm}\end{equation}
Due to that $\lim_{n\rightarrow\infty}\sqrt{{(n+N_0)}/{n}} = 1$, we have\vspace{-1mm}
\begin{equation*}\vspace{-1mm}
\sqrt{n}\left( \hat{x}_n - x \right) \rightarrow \sqrt{n}\cdot\sqrt{\frac{n+N_0}{n}}\left( \hat{x}_n - x \right) \overset{d}{\rightarrow}\mathcal{N}\left( 0, \Sigma \right),
\end{equation*}
as $n\rightarrow\infty$. By adapting $\alpha$ in \eqref{eq_Sigma}, we can obtain different $\Sigma$, which achieves the minimum value $\Sigma_{\min} = I_{\max}^{-1}$, i.e., the minimum CRLB in (\ref{eq_CRLB}), when $\alpha = \frac{\lambda}{\sqrt{M}(M-1)\pi d}$.

By assuming $\alpha = \frac{\lambda}{\sqrt{M}(M-1)\pi d}$, we can conclude that\vspace{-1mm}
\begin{equation*}\vspace{-1mm}\label{eq_normal2}
	\lim_{n\rightarrow\infty}~n~\mathbb{E}\left[\left(\hat{x}_n - x\right)^2\big| \hat{x}_n \rightarrow x\right] = I_{\max}^{-1}.
\end{equation*}

%
%

\ifreport
\section{Proof of Corollary~\ref{co_1}}\label{proof_co_1}

Let $\mathbf{u}(v) = [u_1(v)~ \cdots~ u_N(v)]^\text{T}$ be a $N$-dimensional vector function, which is first-order differentiable. Similar to (\ref{eq_MMSE})-(\ref{eq_CRLB}), its MSE is lower bounded by
\begin{align}
	\mathbb{E}\left[ \left\| \mathbf{u}(\hat{x}_{n}) - \mathbf{u}(x) \right\|^2_2 \right] = & \sum_{m=1}^N\mathbb{E}\left[ \left( u_m(\hat{x}_{n}) - u_m(x) \right)^2 \right] \nonumber \\ \label{eq_FI_co}
	\ge & \sum_{m=1}^N \frac{1}{nI'_{\max, m}},
\end{align}
where $I'_{\max,m}$ is given by
\begin{align}
I'_{\max,m} = &~\mathbb{E}\left[ \left. \left( \frac{\partial \log p \left( y_i|x, \mathbf{w}_i \right)}{\partial u_m(x)}\right)^2 \right| x, \mathbf{w}_i = \frac{\mathbf{a}(x)}{\sqrt{M}} \right].
\end{align}

According to Theorem~\ref{th_normal}, we have
\begin{equation}
	\lim_{n\rightarrow\infty}~n~\mathbb{E}\left[\left(\hat{x}_n - x\right)^2\big| \hat{x}_n \rightarrow x\right] = I_{\max}^{-1},
\end{equation}
where $I_{\max}$ is given by
\begin{equation}\begin{aligned}
I_{\max} = &~\mathbb{E}\left[ \left. - \frac{\partial^2 \log p \left( y_i|x, \mathbf{w}_i \right)}{\partial x^2} \right| x, \mathbf{w}_i = \frac{\mathbf{a}(x)}{\sqrt{M}} \right] \\
= &~\mathbb{E}\left[ \left. \left(\frac{\partial \log p \left( y_i|x, \mathbf{w}_i \right)}{\partial x} \right)^2 \right| x, \mathbf{w}_i = \frac{\mathbf{a}(x)}{\sqrt{M}} \right].
\end{aligned}\end{equation}
Since $\frac{\partial \log p \left( y_i|x, \mathbf{w}_i \right)}{\partial x}$ can be rewritten as
\begin{align}
\frac{\partial \log p \left( y_i|x, \mathbf{w}_i \right)}{\partial x} = &  \frac{\partial \log p \left( y_i|x, \mathbf{w}_i \right)}{\partial u_m(x)}\cdot u_m'(x),
\end{align}
we get
\begin{align}
I'_{\max,m} = &~\frac{I_{\max}}{\left[u_m'(x)\right]^2},
\end{align}
which results in
\begin{align}
\lim_{n\rightarrow\infty} n~\mathbb{E}\left[\left| u_m(\hat{x}_n) - u_m(x)\right|^2 \Big| \hat{x}_n \rightarrow x\right] = \left[ u_m'(x) \right]^2 I_{\max}^{-1}.\nonumber
\end{align}
Then, based on \eqref{eq_FI_co}, we conclude that
\begin{align}
\lim_{n\rightarrow\infty} n~\mathbb{E}\left[\left\| \mathbf{u}(\hat{x}_n) - \mathbf{u}(x)\right\|^2_2\Big| \hat{x}_n \rightarrow x\right] = \left\|{ {\mathbf{u}}'(x) }\right\|^2_2 I_{\max}^{-1}.\nonumber
\end{align}

\fi

%
%
%

\ifreport
\section{Proof of Lemma \ref{le_sufficient}}\label{sec_proof_le_sufficient}

When $m = 0$, $\tilde{x}^{\tilde{n}(0)}(T_0) = \bar{x}(T_0) = \hat{x}_0$. There are two symmetrical cases: (i) $\hat{x}_0 < x$ and (ii) $\hat{x}_0 > x$. We will consider the \emph{first case}, which can be directly extended to the \emph{second case}.

\emph{Case 1} ($\hat{x}_0 < x$): We will first prove that $\bar{x}(t) \in \mathcal{I} = \Big( x - |x - \hat{x}_0| - \delta,~x + |x - \hat{x}_0| + \delta \Big)$ for all $t \in I_0$.

If $\left| \bar{x}(t) - \tilde{x}^{\tilde{n}(0)}(t)\right| \le \delta$ for all $t \in I_0$, then we have $\bar{x}(t) - \tilde{x}^{\tilde{n}(0)}(t) \ge -\delta$. What's more, due to $\hat{x}_0 \in \mathcal{I} \subset \mathcal{B}(x)$ and the monotonic property of the ODE (\ref{eq_ODE}) within the mainlobe $\mathcal{B}(x)$, we get $\tilde{x}^{\tilde{n}(0)}(t) - \hat{x}_0 \ge 0$ and $x - \tilde{x}^{\tilde{n}(0)}(t) \ge 0$ for all $t \in I_0$. Therefore, we can obtain
\begin{align}\label{eq_left}
	&~\bar{x}(t) - (\hat{x}_0 - \delta) \\
	= &~\left[\bar{x}(t) - \tilde{x}^{\tilde{n}(0)}(t)\right] + \left[\tilde{x}^{\tilde{n}(0)}(t) - \hat{x}_0\right] + \delta \ge 0, \nonumber 
\end{align}
and
\begin{align}\label{eq_right}
	&~(x + |x - \hat{x}_0| + \delta) - \bar{x}(t) \\
	= &~(2x - \hat{x}_0 + \delta) - \bar{x}(t)
	= \left(x - \hat{x}_0\right) + \left[x - \bar{x}(t)\right] + \delta \nonumber\\
	= &~\left(x - \hat{x}_0\right) + \left[x - \tilde{x}^{\tilde{n}(0)}(t)\right] + \left[\tilde{x}^{\tilde{n}(0)}(t) - \bar{x}(t)\right] + \delta \ge 0,\nonumber
\end{align}
which result in $\bar{x}(t) \in \mathcal{I}$ for all $t \in I_0$.

Then, we consider the initial value $\bar{x}(T_1)$ for the next time interval $I_1$. With the $T$ given by \eqref{eq_T}, we have
\begin{align*}
x - \hat{x}_0 \ge \tilde{x}^{\tilde{n}(0)}(T_1) - \hat{x}_0 \ge \tilde{x}^{\tilde{n}(0)}(T) - \hat{x}_0 > \delta.
\end{align*}
Therefore, we get
\begin{align}\label{eq_left2}
	&~\bar{x}(T_1) - \hat{x}_0 \\
	= &~\left[\bar{x}(T_1) - \tilde{x}^{\tilde{n}(0)}(T_1)\right] + \left[\tilde{x}^{\tilde{n}(0)}(T_1) - \hat{x}_0\right] \ge 0,\nonumber
\end{align}
and
\begin{align}\label{eq_right2}
	&~(x + |x - \hat{x}_0|) - \bar{x}(T_1) \\
	= &~(2x - \hat{x}_0) - \bar{x}(T_1)	= \left(x - \hat{x}_0\right) + \left[x - \bar{x}(T_1)\right] \nonumber\\
	= &~\left(x - \hat{x}_0\right) + \left[x - \tilde{x}^{\tilde{n}(0)}(T_1)\right] + \left[\tilde{x}^{\tilde{n}(0)}(T_1) - \bar{x}(T_1)\right] 	\ge 0,\nonumber
\end{align}
which result in $\bar{x}(T_1) \in \big[ x - |x - \hat{x}_0|,~x + |x - \hat{x}_0| \big]$.

\emph{Case 2} ($\hat{x}_0 > x$): Owing to symmetric property, we can use the same method as \eqref{eq_left}-\eqref{eq_right2} to obtain that $\bar{x}(t) \in \mathcal{I}$ for all $t \in I_0$ and $\bar{x}(T_1) \in \big[ x - |x - \hat{x}_0|,~x + |x - \hat{x}_0| \big]$.

\vspace{3mm}
When $m = 1$, $\tilde{x}^{\tilde{n}(1)}(T_1) = \bar{x}(T_1) \in \big[ x - |x - \hat{x}_0|,~x + |x - \hat{x}_0| \big]$. If $\bar{x}(T_1) < x$ and $\left| \bar{x}(t) - \tilde{x}^{\tilde{n}(1)}(t)\right| \le \delta$, then for all $t \in I_1$, we have $\bar{x}(T_1) \ge  \hat{x}_0$, $\tilde{x}^{\tilde{n}(1)}(t) - \hat{x}_0 \ge 0$, $x - \tilde{x}^{\tilde{n}(1)}(t) \ge 0$, and
\begin{align*}
x - \hat{x}_0 \ge \tilde{x}^{\tilde{n}(1)}(T_2) - \hat{x}_0 \ge \tilde{x}^{\tilde{n}(1)}(T_1+T) - \hat{x}_0 > \delta.
\end{align*}
Similar to \eqref{eq_left}-\eqref{eq_right2}, we can get $\bar{x}(t) \in \mathcal{I}~\text{for all}~t \in I_1$ and $\bar{x}(T_2) \in \big[ x - |x - \hat{x}_0|,~x + |x - \hat{x}_0| \big]$, which are also true for the case that $\bar{x}(T_1) > x$.

Hence, we can use the same method to prove the case of $m \ge 2$, which finally yields $\bar{x}(t) \in \mathcal{I}$ for all $t \in I_m$ and $m \ge 0$. Since $\bar{x}(t_n) = \hat{x}_n$ for all $n \ge 0$, we can obtain that $\hat{x}_n \in \mathcal{I}~\text{for all}~n \ge 0$, which completes the proof.

\vspace{-3mm}
\section{Proof of Lemma \ref{le_lower_bound}}\label{sec_proof_le_lower_bound}
\vspace{-1mm}

The following lemmas are needed to prove Lemma \ref{le_lower_bound}:
\begin{lemma}\vspace{-1mm}\label{le_gronwall}
Let $n_T \overset{\Delta}{=} \inf \left\{i \in \mathbb{Z}: t_{n+i} \ge t_n + T \right\}$. If there exists a constant $C>0$, which satisfies\vspace{-1mm}
\begin{equation}\vspace{-1mm}\label{eq_gronwall1}
\begin{aligned}
	&~\left| \bar{x}(t_{n+m}) - \tilde{x}^n(t_{n+m})\right| \\
	\le &~L \sum_{i=1}^{m} a_{n+i} \left| \bar{x}(t_{n+i-1}) - \tilde{x}^n(t_{n+i-1}) \right|  + C,
\end{aligned}
\end{equation}
for all $n \ge 0$ and $1 \le m \le n_T$, then\vspace{-1mm}
\begin{equation}\vspace{-1mm}\label{eq_gronwall2}
\begin{aligned}
	\underset{t\in\left[ t_n, t_{n+n_T} \right]}{\sup} \left| \bar{x}(t) - \tilde{x}^n(t)\right| \le  \frac{\sqrt{M} a_{n+1}}{2} + C e^{L (T+a_1)}.
\end{aligned}
\end{equation}
\end{lemma}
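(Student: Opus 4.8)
The plan is to read Lemma~\ref{le_gronwall} as a purely deterministic comparison estimate. The hypothesis \eqref{eq_gronwall1} already isolates the ``Gr\"onwall part''---the Lipschitz coupling of the interpolation $\bar x$ and the ODE trajectory $\tilde x^n$ through $f$---while the constant $C$ absorbs the martingale noise and the Riemann-sum discretization error that are controlled separately (in the proof of Lemma~\ref{le_lower_bound}). So the task reduces to two moves: (i) turning the recursive grid-point inequality into a closed-form bound at the sample times $t_{n+m}$, and (ii) upgrading that grid bound to a supremum over the whole continuous window $[t_n, t_{n+n_T}]$.

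First I would apply the discrete Gr\"onwall inequality to \eqref{eq_gronwall1}. Writing $u_m \overset{\Delta}{=} |\bar x(t_{n+m}) - \tilde x^n(t_{n+m})|$, the hypothesis reads $u_m \le C + \sum_{j=0}^{m-1}(L a_{n+j+1})\,u_j$, and since $\tilde x^n(t_n) = \bar x(t_n)$ gives $u_0 = 0$, the standard discrete Gr\"onwall bound (telescoping the factors $1 + La_{n+i}$ and using $1+\xi \le e^{\xi}$) yields $u_m \le C\exp\!\left(L\sum_{i=1}^{m} a_{n+i}\right)$ for every $1 \le m \le n_T$. The next step is to cap the exponent independently of $n$: by the definition $n_T = \inf\{i: t_{n+i} \ge t_n + T\}$ we have $t_{n+n_T-1} - t_n < T$, and since the step-sizes \eqref{eq_stepsize} are non-increasing, $a_{n+n_T} \le a_1$; hence $\sum_{i=1}^{m} a_{n+i} = t_{n+m} - t_n \le t_{n+n_T} - t_n < T + a_1$, and therefore $u_m \le C\,e^{L(T+a_1)}$ uniformly over the grid.

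Then I would pass from the grid to the continuum. Fix $t \in [t_{n+m}, t_{n+m+1}]$ and set $\theta = (t - t_{n+m})/a_{n+m+1} \in [0,1]$. Because $\bar x$ is the linear interpolation \eqref{eq_continuous}, $\bar x(t) = (1-\theta)\bar x(t_{n+m}) + \theta\,\bar x(t_{n+m+1})$, so
\[
\bar x(t) - \tilde x^n(t) = (1-\theta)[\bar x(t_{n+m}) - \tilde x^n(t_{n+m})] + \theta[\bar x(t_{n+m+1}) - \tilde x^n(t_{n+m+1})] - R(\theta),
\]
where $R(\theta)$ is the deviation of the ODE solution from its own chord. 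The convex-combination term is at most $\max\{u_m, u_{m+1}\} \le C e^{L(T+a_1)}$. For $R$, I would use that $\tilde x^n$ solves \eqref{eq_ODE_new} with $|\dot{\tilde x}^n| = |f| \le \sqrt M$ (from \eqref{eq_fx}, since $|\mathbf a(v)^\text{H}\mathbf a(x)| \le M$); writing $R(\theta) = (1-\theta)[\tilde x^n(t)-\tilde x^n(t_{n+m})] + \theta[\tilde x^n(t)-\tilde x^n(t_{n+m+1})]$ and bounding each increment by $\sqrt M$ times its time gap gives $|R(\theta)| \le 2\theta(1-\theta)\sqrt M\,a_{n+m+1} \le \tfrac12 \sqrt M\,a_{n+1}$, using $2\theta(1-\theta) \le \tfrac12$ and $a_{n+m+1} \le a_{n+1}$. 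Taking the supremum over $t$ and over $m < n_T$ then produces \eqref{eq_gronwall2}.

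The main obstacle is the continuous-time upgrade rather than the Gr\"onwall step itself: one must avoid the naive triangle-inequality bound (which would lose the factor $\tfrac12$) and instead exploit the linearity of $\bar x$ together with the chord-remainder estimate $2\theta(1-\theta)\le\tfrac12$ to obtain the sharp $\tfrac{\sqrt M a_{n+1}}{2}$ coefficient. A secondary point that needs care is the clean use of the stopping index $n_T$ to guarantee $\sum_{i=1}^{m} a_{n+i} < T + a_1$ for all admissible $m$, since this is exactly what keeps the Gr\"onwall exponent, and hence the final bound, uniform in $n$.
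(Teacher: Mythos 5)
Your proposal is correct and follows essentially the same route as the paper's proof: a discrete Gr\"onwall step at the grid points, the bound $\sum_{i=1}^{m} a_{n+i} = t_{n+m}-t_n \le T+a_1$ from the definition of $n_T$ to make the exponent uniform, and then the convex-interpolation decomposition with $|f|\le\sqrt M$ and $2\theta(1-\theta)\le\tfrac12$ to obtain the $\tfrac{\sqrt M a_{n+1}}{2}$ term. Your $\theta$ is the paper's $1-\gamma$ and your remainder $R(\theta)$ is exactly the pair of integral terms the paper bounds in its displayed chain, so the two arguments coincide step for step.
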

\begin{proof}
See Appendix \ref{sec_proof_le_gronwall}.
\end{proof}

\begin{lemma}\vspace{-1mm}\label{le_cheb}
If $\{M_i: i = 1, 2, \ldots\}$ satisfies that: (i)  $M_i$ is Gaussian distributed with zero mean, and (ii) $M_i$ is a martingale in $i$, then\vspace{-1mm}
\begin{equation}\vspace{-1mm}\label{eq_lock5}
\begin{aligned}
	& P\left( \underset{0\le i \le k}{\sup}\left|M_i\right| > \eta \right) \le 2\exp\left\{-\frac{\eta^2}{2\operatorname{Var}\left[M_k\right]}\right\},
\end{aligned}
\end{equation}
for any $\eta > 0$.
\end{lemma}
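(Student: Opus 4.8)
The plan is to combine a Doob-type maximal inequality with the classical Gaussian Chernoff bound, which is exactly the structure suggested by the factor $2$ and the Gaussian exponent on the right-hand side. First I would fix an arbitrary $\theta > 0$ and let $\{\mathcal{F}_i\}$ denote the filtration with respect to which $\{M_i\}$ is a martingale. Since $x \mapsto e^{\theta x}$ is convex and each $M_i$ is Gaussian (so $\mathbb{E}[e^{\theta M_i}] < \infty$), conditional Jensen's inequality gives $\mathbb{E}[e^{\theta M_{i+1}} \mid \mathcal{F}_i] \ge e^{\theta M_i}$, so that $\{e^{\theta M_i}\}$ is a nonnegative submartingale. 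Applying Doob's submartingale maximal inequality to this process yields
\begin{equation*}
P\left( \underset{0\le i \le k}{\sup}\, M_i \ge \eta \right) = P\left( \underset{0\le i \le k}{\sup}\, e^{\theta M_i} \ge e^{\theta\eta} \right) \le e^{-\theta\eta}\,\mathbb{E}\left[ e^{\theta M_k} \right].
\end{equation*}

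Next I would evaluate the moment generating function. Because $M_k$ is Gaussian with zero mean and variance $\operatorname{Var}[M_k]$, the exact identity $\mathbb{E}[e^{\theta M_k}] = \exp\{\theta^2 \operatorname{Var}[M_k]/2\}$ holds, so the previous bound becomes $\exp\{\theta^2 \operatorname{Var}[M_k]/2 - \theta\eta\}$. Minimizing this exponent over $\theta > 0$ gives the optimal choice $\theta = \eta/\operatorname{Var}[M_k]$ and the one-sided estimate
\begin{equation*}
P\left( \underset{0\le i \le k}{\sup}\, M_i \ge \eta \right) \le \exp\left\{ -\frac{\eta^2}{2\operatorname{Var}[M_k]} \right\}.
\end{equation*}

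Finally, since $\{-M_i\}$ is again a zero-mean Gaussian martingale with the same variance, the identical argument bounds $P\big(\sup_i (-M_i) \ge \eta\big)$ by the same quantity. As the event $\{\sup_i |M_i| > \eta\}$ is contained in $\{\sup_i M_i \ge \eta\} \cup \{\sup_i (-M_i) \ge \eta\}$, a union bound doubles the estimate and produces the claimed factor of $2$. The steps requiring verification are the exponential submartingale property (so that Doob's inequality is applicable) and the finiteness of the Gaussian MGF, both of which are immediate here; the only genuinely load-bearing observation is that Doob's inequality delivers the bound in terms of the \emph{terminal} MGF $\mathbb{E}[e^{\theta M_k}]$, which is precisely why $\operatorname{Var}[M_k]$, rather than any intermediate variance, appears in the final exponent.
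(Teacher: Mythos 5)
Your proposal is correct and follows essentially the same route as the paper's own proof: a Doob maximal inequality applied to the exponentiated martingale, exact evaluation of the Gaussian moment generating function, optimization of the exponent at $\theta = \eta/\operatorname{Var}[M_k]$, and a symmetry/union-bound argument to obtain the factor of $2$. Your explicit justification of the submartingale property of $e^{\theta M_i}$ via conditional Jensen is a detail the paper leaves implicit, but the argument is otherwise identical.
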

\begin{proof}
See Appendix \ref{sec_proof_le_cheb}.
\end{proof}

\begin{lemma}\vspace{-1mm}\label{le_sum}
If given a constant $C > 0$, then\vspace{-1mm}
\begin{equation}\vspace{-1mm}\label{eq_increasing}
\begin{aligned}
	G(v) = \frac{1}{v}\exp\left[-\frac{C}{v}\right],
\end{aligned}
\end{equation}
is increasing for all $0 < v < C$.
\end{lemma}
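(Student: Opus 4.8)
The plan is to establish monotonicity directly from the first derivative: I would compute $G'(v)$, factor it so that a single sign-determining term is isolated, and then check that this term is positive on the interval $(0,C)$. Since $G(v) = v^{-1} e^{-C/v}$ is a product of two smooth functions on $(0,\infty)$, the natural tool is the product rule combined with the chain rule, noting that $\frac{d}{dv}\left(-\frac{C}{v}\right) = \frac{C}{v^2}$.

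First I would differentiate, obtaining
\begin{equation*}
G'(v) = -\frac{1}{v^2}\, e^{-C/v} + \frac{1}{v}\cdot e^{-C/v}\cdot \frac{C}{v^2}.
\end{equation*}
The crucial step is then to factor out the common strictly positive quantity $e^{-C/v}/v^3$, which collapses the expression to
\begin{equation*}
G'(v) = \frac{e^{-C/v}}{v^3}\,\bigl( C - v \bigr).
\end{equation*}

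Finally I would argue the sign of each factor on $(0,C)$: the exponential $e^{-C/v}$ is always strictly positive, $v^3 > 0$ because $v > 0$, and $C - v > 0$ precisely because $v < C$. Hence $G'(v) > 0$ for every $v \in (0,C)$, so $G$ is strictly increasing on that interval, as claimed. The calculation is elementary and presents no real obstacle; the only point deserving care is carrying out the factorization cleanly so that the monotonicity reduces to the transparent condition $v < C$.
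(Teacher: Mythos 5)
Your proposal is correct and follows exactly the same route as the paper: compute $G'(v)$ by the product and chain rules, factor it into $\frac{C-v}{v^3}e^{-C/v}$, and observe that this is positive for $0<v<C$. The only difference is that you spell out the intermediate differentiation steps, which the paper omits.
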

\begin{proof}
The derivative of $G(v)$ is\vspace{-1mm}
\begin{equation*}\vspace{-1mm}
	G'(v) = \frac{C - v}{v^3}\exp\left[-\frac{C}{v}\right].
\end{equation*}

Let $G'(v) > 0$ and we can obtain that $G(v)$ is increasing for $v \in (0, C)$, which completes the proof.
\end{proof}

{\blue A lower bound of the probability that the sequence $\{\hat{x}_n\}$ remains in the invariant set $\mathcal{I}$ can be derived as follows:\vspace{-0.5mm}
\begin{equation}\vspace{-1mm}\label{eq_p_invariant}
\begin{aligned}
	& P\left( \hat{x}_n \in \mathcal{I}, \forall n \ge 0 \right) \\
	\overset{(a)}{\ge} & P\left( \underset{t\in I_m}{\sup} \left| \bar{x}(t) - \tilde{x}^{\tilde{n}(m)}(t)\right| \le \delta, \forall m \ge 0 \right)  \\
	\overset{(b)}{\ge} & 1 - \sum_{m\ge 0} P\left( \left. \underset{t\in I_m}{\sup} \left| \bar{x}(t) - \tilde{x}^{\tilde{n}(m)}(t)\right| > \delta \right| \right.  \\
	&~~~~~~~~~~~~~~~\left. \underset{t\in I_i}{\sup} \left| \bar{x}(t) - \tilde{x}^{\tilde{n}(i)}(t)\right| \le \delta, 0 \le i < m \right)
\end{aligned}
\end{equation}
where Step $(a)$ is due to Lemma \ref{le_sufficient}, Step $(b)$ is due to Lemma 4.2 in \cite{borkar2008stochastic}. To calculate the lower bound of probability, we denote the continuous processes in \eqref{eq_continuous} and \eqref{eq_ODE_new} as follows:\vspace{-1mm}
\begin{equation}\vspace{-1mm}\label{eq_seq_trace}
\begin{aligned} \bar{x}(t_{n+m}) = &~\bar{x}(t_n)  + \sum_{i=1}^{m} a_{n+i} f(\bar{x}(t_{n+i-1}), x) \\
&~+ (\xi_{n+m} - \xi_{n}),
\end{aligned}\end{equation}
and\vspace{-1mm}
\begin{equation}\vspace{-1mm}\label{eq_ode_trace}\begin{aligned}
\tilde{x}^n(t_{n+m}) = &~\tilde{x}^n(t_n) + \int_{t_n}^{t_{n+m}} f(\tilde{x}^n(v), x) dv \\
= &~\tilde{x}^n(t_n) + \sum_{i=1}^{m} a_{n+i} f(\tilde{x}^n(t_{n+i-1}), x) \\
	&~\!\!\!\!\!+ \int_{t_n}^{t_{n+m}} \left[f(\tilde{x}^n(v), x) - f(\tilde{x}^n(\underline{v}), x) \right]dv,
\end{aligned}\end{equation}
for $n \ge 0, 1 \le m \le n_T$, where $\xi_0 \overset{\Delta}{=} 0, \xi_n \overset{\Delta}{=} \sum_{i=1}^{n} a_{i} \hat{z}_{i} $, $n \ge 1$, $\hat{z}_i = -{\operatorname{Im}\left\{{z}_i\right\}}/{\sqrt{\rho}} \sim \mathcal{N}\left( 0, {1}/({2\rho})\right)$, and $\underline{v} \overset{\Delta}{=} \max \left\{ t_n: t_n \le v, n \ge 0 \right\}$ for $v \ge 0$.

In order to bound $\left| \bar{x}(t_{n+m}) - \tilde{x}^n(t_{n+m})\right|$, we first obtain the Lipschitz constant of the function $f(v,x)$ w.r.t. the first variable $v$, given by\vspace{-0.5mm}
\begin{equation}\vspace{-0.5mm}\label{eq_Lip}
	L \overset{\Delta}{=} \underset{v_1 \ne v_2}{\sup} \frac{\left| f(v_1,x) - f(v_2,x) \right|}{\left| v_1 - v_2 \right|} = \frac{\sqrt{M}(M-1)\pi d}{\lambda}.
\end{equation}
In addition, similar to \eqref{eq_expectation_yn}, we can obtain}\vspace{-0.5mm}
\begin{equation}\vspace{-0.5mm}\label{eq_CT}
\begin{aligned}
	\left| f(\tilde{x}^n(t), x) \right| \le \sqrt{M}, \forall t \ge t_n.
\end{aligned}
\end{equation}
Hence, we have\vspace{-0.5mm}
\begin{equation}\vspace{-0.5mm}\label{eq_int}
\begin{aligned}
& \left| \int_{t_n}^{t_{n+m}} \left[f(\tilde{x}^n(v), x) - f(\tilde{x}^n(\underline{v}, x)) \right]dv \right| \\
\le & \int_{t_n}^{t_{n+m}} \left| f(\tilde{x}^n(v), x) - f(\tilde{x}^n(\underline{v}), x) \right| dv \\
\overset{(a)}{\le} & \int_{t_n}^{t_{n+m}} L \left| \tilde{x}^n(v) - \tilde{x}^n(\underline{v}) \right| dv \\
\overset{(b)}{\le} & \int_{t_n}^{t_{n+m}} L \left| \int_{\underline{v}}^{v} f(\tilde{x}^n(s), x) ds \right| dv \\
\le & \int_{t_n}^{t_{n+m}} \int_{\underline{v}}^{v} L \left| f(\tilde{x}^n(s), x) \right| ds dv \\
\overset{(c)}{\le} & \int_{t_n}^{t_{n+m}} \int_{\underline{v}}^{v} \sqrt{M} L ds dv =   \int_{t_n}^{t_{n+m}}\sqrt{M} L (v - \underline{v}) dv \\
=  & \sum_{i=1}^{m} \int_{t_{n+i-1}}^{t_{n+i}}\sqrt{M} L (v - t_{n+i-1}) dv \\
= & \sum_{i=1}^{m} \frac{\sqrt{M} L (t_{n+i} - t_{n+i-1})^2}{2}= \frac{\sqrt{M} L}{2} \sum_{i=1}^{m} a_{n+i}^2,
\end{aligned}
\end{equation}
where Step $(a)$ uses (\ref{eq_Lip}), Step $(b)$ is due to the definition in (\ref{eq_ODE_new}), and Step $(c)$ uses (\ref{eq_CT}). {\blue Then, by using (\ref{eq_seq_trace})-(\ref{eq_Lip}) and (\ref{eq_int}), we can get\vspace{-0.5mm}
\begin{equation}\vspace{-0.5mm}\label{eq_lock2}
\begin{aligned}
	& \left| \bar{x}(t_{n+m}) - \tilde{x}^n(t_{n+m})\right| \\
	\le & L \sum_{i=1}^{m} a_{n+i} \left| \bar{x}(t_{n+i-1}) - \tilde{x}^n(t_{n+i-1}) \right| \\
	&  + \frac{\sqrt{M} L}{2} \sum_{i=1}^{n_T} a_{n+i}^2+ \underset{1 \le m\le n_T}{\sup}|\xi_{n+m} - \xi_{n}|,
\end{aligned}
\end{equation}
for $n \ge 0, 1 \le m \le n_T$. Applying Lemma \ref{le_gronwall} to (\ref{eq_lock2}) and letting $C = \frac{\sqrt{M} L}{2} \sum_{i=1}^{n_T} a_{n+i}^2+ \underset{1\le m\le n_T}{\sup}|\xi_{n+m} - \xi_{n}|, n = \tilde{n}(m)$,
yields\vspace{-1mm}
\begin{equation}\vspace{-1mm}\label{eq_lock3-2}
\begin{aligned}
	&~\underset{t\in I_m}{\sup} \left| \bar{x}(t) - \tilde{x}^{\tilde{n}(m)}(t)\right|	 \\
	\le &~C_e \left\{ \frac{\sqrt{M} L}{2} \big[ b(\tilde{n}(m)) - b(\tilde{n}(m+1)) \big]\right. \\
	& \left.  + \underset{\tilde{n}(m) \le k \le \tilde{n}(m+1)}{\sup}\left|\xi_{k} - \xi_{\tilde{n}(m)}\right| \right\} + \frac{\sqrt{M} a_{\tilde{n}(m)+1}}{2},
\end{aligned}
\end{equation}
where $\tilde{n}(m+1) = n + n_T$, $C_e \overset{\Delta}{=} e^{L (T+a_1)}$, and $b(n) \overset{\Delta}{=} \sum_{i > n} a_{i}^2$. Also, we suppose that the step-sizes $\{a_n\}$ satisfy\vspace{-1mm}
\begin{equation}\vspace{-1mm}\label{eq_lock_constr}
C_e \frac{\sqrt{M} L}{2} \big[b(\tilde{n}(m)) - b(\tilde{n}(m+1))\big] +  \frac{\sqrt{M} a_{\tilde{n}(m)+1}}{2} < \frac{\delta}{2},
\end{equation}
for $m \ge 0$. Then, given $\underset{t\in I_m}{\sup} \left| \bar{x}(t) - \tilde{x}^{\tilde{n}(m)}(t)\right| > \delta$, we can obtain from \eqref{eq_lock3-2} and \eqref{eq_lock_constr} that\vspace{-1mm}
\begin{equation*}\vspace{-1mm}
\begin{aligned}
	&~\underset{\tilde{n}(m)\le k \le \tilde{n}(m+1)}{\sup}\left|\xi_{k} - \xi_{\tilde{n}(m)}\right| \\
		> &~\frac{1}{C_e}\left( \underset{t\in I_m}{\sup} \left| \bar{x}(t) - \tilde{x}^{\tilde{n}(m)}(t)\right| - \frac{\delta}{2} \right) > \frac{\delta}{2C_e}.
\end{aligned}
\end{equation*}
Hence, we get}\vspace{-1mm}
\begin{equation}\vspace{-1mm}\label{eq_lock4}
\begin{aligned}
	&~P\left( \left. \underset{t\in I_m}{\sup} \left| \bar{x}(t) - \tilde{x}^{\tilde{n}(m)}(t)\right| > \delta \right|\right. \\
	&~~~~~~\left.\underset{t\in I_i}{\sup} \left| \bar{x}(t) - \tilde{x}^{\tilde{n}(i)}(t)\right| \le \delta, 0 \le i < m \right) \\
	{\le} & P\left( \left. \underset{\tilde{n}(m)\le k \le \tilde{n}(m+1)}{\sup}\left|\xi_{k} - \xi_{\tilde{n}(m)}\right| > \frac{\delta}{2C_e} \right| \right. \\
	&~~~~~~\left.  \underset{t\in I_i}{\sup} \left| \bar{x}(t) - \tilde{x}^{\tilde{n}(i)}(t)\right| \le \delta, 0 \le i < m \right) \\
	\overset{(a)}{=} &~P\left( \underset{\tilde{n}(m)\le k \le \tilde{n}(m+1)}{\sup}\left|\xi_{k} - \xi_{\tilde{n}(m)}\right| > \frac{\delta}{2C_e} \right),
\end{aligned}
\end{equation}
where Step $(a)$ is due to the independence of noise, i.e., $\left( \xi_{k} - \xi_{\tilde{n}(m)} \right), \tilde{n}(m) \le k \le \tilde{n}(m+1)$ are independent of $\hat{x}_n, 0 \le n \le \tilde{n}(m)$.

%


{\blue Next, we will consider how to calculate a lower bound for \eqref{eq_lock4}. With the increasing $\sigma$-fields $\{\!\mathcal{G}_n\!:\!n\!\ge\!0\!\}$ defined in Appendix \ref{proof_converge}, we have for $n \ge 0$,
\begin{itemize}
\item[1)] $\xi_n = \sum_{m=1}^{n} a_{m} \hat{z}_{m} \sim \mathcal{N}(0, \sum_{m=1}^n \frac{a_m^2}{2\rho})$,

\item[2)] $\xi_n$ is $\mathcal{G}_n$-measurable, i.e., $\mathbb{E} \left[ \left. \xi_n \right| \mathcal{G}_n \right] = \xi_n$,

\item[3)] $\mathbb{E} \left[ \left| \xi_n \right|^2 \right] = \sum_{m=1}^n \frac{a_m^2}{2\rho} < \infty$,

\item[4)] $\mathbb{E} \left[ \left. \xi_n \right| \mathcal{G}_m \right] = \xi_m$ for all $0 \le m < n$.
\end{itemize}
Therefore, $\xi_n$ is Gaussian distributed with zero mean, and is a martingale w.r.t. $\mathcal{G}_n$. Letting $\eta = \frac{\delta}{2C_e}$, $M_i = \xi_{\tilde{n}(m)+i} - \xi_{\tilde{n}(m)}$ and $k = {\tilde{n}(m+1) - \tilde{n}(m)}$ in Lemma \ref{le_cheb}, we can obtain\vspace{-1mm}
\begin{equation}\vspace{-1mm}\label{eq_lock6}
\begin{aligned}
	&~ P\left( \underset{\tilde{n}(m)\le k \le \tilde{n}(m+1)}{\sup}\left|\xi_{k} - \xi_{\tilde{n}(m)}\right| > \frac{\delta}{2C_e} \right) \\
	\le & ~2\exp\left\{-\frac{\delta^2}{8C_e^2\operatorname{Var}\left[\xi_{\tilde{n}(m+1)} - \xi_{\tilde{n}(m)}\right]}\right\} \\
	= & ~2\exp\left\{-\frac{\rho\delta^2}{4C_e^2\big[b(\tilde{n}(m)) - b(\tilde{n}(m+1))\big]}\right\}.
\end{aligned}
\end{equation}

Combining \eqref{eq_p_invariant}, \eqref{eq_lock4} and \eqref{eq_lock6}, we have\vspace{-1mm}
\begin{equation}\vspace{-1mm}\label{eq_lock7}
\begin{aligned}
\!&~P\left( \hat{x}_n \in \mathcal{I}, \forall n \ge 0 \right)  \\
	\!\ge &~1 -  2\sum_{m \ge 0} \exp\left\{-\frac{\rho\delta^2}{4C_e^2\big[b(\tilde{n}(m)) - b(\tilde{n}(m+1))\big]}\right\}. 
\end{aligned}
\end{equation}
To further simplify \eqref{eq_lock7}, we assume that the step-sizes satisfy\vspace{-1mm}
\begin{equation}\vspace{-1mm}\label{eq_lock_constr2}
b(0) = \sum_{i > 0} a_i^2 \le \frac{\rho\delta^2}{4C_e^2}.
\end{equation}
Then, from Lemma \ref{le_sum}, we can obtain\vspace{-1mm}
\begin{equation}\vspace{-1mm}\label{eq_lock7-2}
\begin{aligned}
	&~\sum_{m\ge 0} \exp\left\{-\frac{\rho\delta^2}{4C_e^2\big[b(\tilde{n}(m)) - b(\tilde{n}(m+1))\big]}\right\} \\
	\le &~\sum_{m \ge 0} \left[ b(\tilde{n}(m)) - b(\tilde{n}(m+1))\right] \cdot \frac{\exp\left\{-\frac{\rho\delta^2}{4C_e^2b(0)}\right\} }{b(0)} \\
	= &~b(0) \cdot \frac{\exp\left\{-\frac{\rho\delta^2}{4C_e^2b(0)}\right\} }{b(0)} = \exp\left\{-\frac{\rho\delta^2}{4C_e^2b(0)}\right\}.
\end{aligned}
\end{equation}
where $b(\tilde{n}(m)) - b(\tilde{n}(m+1)) \le b(0)$. As $C_e = e^{L (T+a_1)}$, $b(0) = \sum_{i > 0} a_{i}^2$, and $a_n, T, L$ are given by \eqref{eq_stepsize}, \eqref{eq_T}, \eqref{eq_Lip}  separately, we can obtain\vspace{-1mm}
\begin{equation}\vspace{-1mm}\label{eq_exponential}
\frac{\rho\delta^2}{4C_e^2b(0)} = \frac{\delta^2}{4 e^{2L (T+\frac{\alpha}{N_0+1})} \sum_{i \ge 1} \frac{1}{(i+N_0)^2}} \cdot \frac{\rho}{\alpha^2}.
\end{equation}
To ensures that $\hat{x}_0 + a_{1}f(\hat{x}_0,x)$ does not exceed the mainlobe $\mathcal{B}(x)$, i.e., the first step-size $a_{1}$ satisfies\vspace{-1mm}
\begin{equation*}\vspace{-1mm}\left|\hat{x}_0 + a_{1}f(\hat{x}_0,x) - x\right| < \frac{\lambda}{Md},\end{equation*}
we can obtain the maximum $\alpha$ as follows:\vspace{-1mm}
\begin{equation*}\vspace{-1mm}
	\alpha_{\max} = \frac{(N_0+1)\left(\left|x-\hat{x}_0\right|+\frac{\lambda}{Md}\right)}{\left|f(\hat{x}_0, x)\right|}.
\end{equation*}
Hence, from \eqref{eq_exponential}, we have\vspace{-1mm}
\begin{equation}\vspace{-1mm}\label{eq_exponential_new}
\frac{\rho\delta^2}{4C_e^2b(0)} \!\cdot\! \frac{\alpha^2}{\rho}  \!\ge\!  \frac{\delta^2}{4 e^{2L (T+\frac{\alpha_{\max}}{N_0+1})} \sum_{i \ge 1} \frac{1}{(i+N_0)^2}} \overset{\Delta}{=} C_0 > 0,
\end{equation}
where \eqref{eq_delta_choice}, (\ref{eq_lock_constr}) and (\ref{eq_lock_constr2}) should be satisfied, which can be achieved with a sufficiently large $N_0 \ge 0$. }

Finally, from \eqref{eq_lock7}, \eqref{eq_lock7-2} and \eqref{eq_exponential_new}, we can get\vspace{-1mm}
\begin{equation*}\vspace{-1mm}
\begin{aligned}
P\left( \hat{x}_n \in \mathcal{I}, \forall n \ge 0 \right) \ge 1 -  2e^{-C_0\cdot\frac{\rho}{\alpha^2}},
\end{aligned}
\end{equation*}
which completes the proof.

\section{Proof of Lemma \ref{le_gronwall}}\label{sec_proof_le_gronwall}
Apply the discrete Gronwall inequality \cite{holte2009discrete}, leading (\ref{eq_gronwall1}) to
\begin{equation}\label{eq_gronwall3}
\begin{aligned}
	\left| \bar{x}(t_{n+m}) - \tilde{x}^n(t_{n+m})\right| \le C e^{L\sum_{i=1}^m a_{n+i}}.
\end{aligned}
\end{equation}
Since $1 \le m \le n_T$ and $n_T = \inf \left\{i \in \mathbb{Z}: t_{n+i} \ge t_n + T \right\} $, we get
\begin{equation}\label{eq_gronwall33}
\begin{aligned}
	\sum_{i=1}^m a_{n+i} = t_{n+m} - t_n \le T + a_{n+n_T} \le T + a_1.
\end{aligned}
\end{equation}
By combining \eqref{eq_gronwall3} and \eqref{eq_gronwall33}, we have
\begin{equation}\label{eq_gronwall4}
\begin{aligned}
	\left| \bar{x}(t_{n+m}) - \tilde{x}^n(t_{n+m})\right| \le &~C e^{L (T+a_1)}.
\end{aligned}
\end{equation}

For $\forall t \in [t_{n+m-1}, t_{n+m}], 1 \le m \le n_T$, from \eqref{eq_continuous}, we have
\begin{equation*}\label{eq_gronwall5}
\begin{aligned}
	\bar{x}(t) & = \bar{x}(t_{n+m-1}) + \frac{(t-t_{n+m-1})\left[\bar{x}(t_{n+m}) - \bar{x}(t_{n+m-1})\right]}{a_{n+m}}\\
	& = \gamma \bar{x}(t_{n+m-1}) + (1 - \gamma) \bar{x}(t_{n+m}),
\end{aligned}
\end{equation*}
where $\gamma = \frac{t_{n+m} - t}{a_{n+m}} \in [0, 1]$. Then, we can get \eqref{eq_gronwall6} on the top of the next page, where Step $(a)$ is according to the definition of $\tilde{x}^n(t)$ in \eqref{eq_ODE_new}, Step $(b)$ is due to (\ref{eq_gronwall4}), Step $(c)$ is obtained from (\ref{eq_CT}), and Step $(d)$ is obtained by using $\gamma = \frac{t_{n+m} - t}{a_{n+m}}$.

\begin{figure*}[t]
\begin{equation}\begin{aligned}\label{eq_gronwall6}
	&~\left| \bar{x}(t) - \tilde{x}^n(t)\right| \\
	= &~\left| \gamma(\bar{x}(t_{n+m-1}) - \tilde{x}^n(t)) + (1 - \gamma) (\bar{x}(t_{n+m}) - \tilde{x}^n(t)) \right| \\
	\overset{(a)}{=} &~\left| \gamma \left[ \bar{x}(t_{n+m-1}) - \tilde{x}^n(t_{n+m-1}) - \int_{t_{n+m-1}}^t f(\tilde{x}^n(s), x) ds \right] \right. \\&~\left.+ (1 - \gamma) \left[ \bar{x}(t_{n+m}) - \tilde{x}^n(t_{n+m}) -  \int_{t_{n+m}}^t f(\tilde{x}^n(s), x) ds \right] \right| \\
	\le &~\gamma\left| \int_{t_{n+m-1}}^t f(\tilde{x}^n(s), x) ds \right| + (1 - \gamma)\left| \int_{t_{n+m}}^t f(\tilde{x}^n(s), x) ds \right| \\
	&~+ \gamma\left| \bar{x}(t_{n+m-1}) - \tilde{x}^n(t_{n+m-1}) \right| + (1 - \gamma)\left| \bar{x}(t_{n+m}) - \tilde{x}^n(t_{n+m}) \right| \\
	\overset{(b)}{\le} &~\gamma  \int_{t_{n+m-1}}^t \left|f(\tilde{x}^n(s), x) \right| ds + (1 - \gamma)\int_t^{t_{n+m}} \left| f(\tilde{x}^n(s), x) \right| ds + Ce^{L (T+a_1)} \\
	\overset{(c)}{\le} &~\sqrt{M} \gamma (t - t_{n+m-1}) + \sqrt{M}(1 - \gamma)(t_{n+m} - t) + Ce^{L (T+a_1)} \\
	\overset{(d)}{\le} &~2\sqrt{M} a_{n+m}\gamma(1-\gamma)  + Ce^{L (T+a_1)} \le \frac{\sqrt{M} a_{n+m}}{2} + Ce^{L (T+a_1)} \\
	\le &~\underset{1 \le m \le n_T}{\sup} \frac{\sqrt{M} a_{n+m}}{2} + C e^{L (T+a_1)} = \frac{\sqrt{M} a_{n+1}}{2} + C e^{L (T+a_1)}.
\end{aligned}\end{equation}
\hrulefill
\end{figure*}

Therefore, from \eqref{eq_gronwall6}, we can obtain
\begin{equation*}
\begin{aligned}
	& \underset{t\in\left[ t_n, t_{n+n_T} \right]}{\sup} \left| \bar{x}(t) - \tilde{x}^n(t)\right| 	\le \frac{\sqrt{M} a_{n+1}}{2} + C e^{L (T+a_1)},
\end{aligned}
\end{equation*}
which completes the proof.


\section{Proof of Lemma \ref{le_cheb}}\label{sec_proof_le_cheb}
As $M_i$ is Gaussian distributed with zero mean, and is a martingale in $i$.
By utilizing the Doob's inequality \cite{Hoeffding1963Probability} for $\eta > 0$, we have
\begin{equation}
\begin{aligned}
	& P\left( \underset{0\le i \le k}{\sup}M_i > \eta \right) \le & \frac{\mathbb{E}\left[ e^{CM_k} \right]}{e^{C\eta}}.
\end{aligned}
\end{equation}

Due to the property of zero-mean Gaussian distribution, we have
\begin{equation}
	\mathbb{E}\left[ e^{CM_k} \right] = \exp\left\{\frac{C^2}{2}\operatorname{Var}\left[M_k\right]\right\}.
\end{equation}
Then we can obtain
\begin{equation}\label{eq_doob}
\begin{aligned}
	& P\left( \underset{0\le i \le k}{\sup}M_i > \eta \right) \le & \exp\left\{\frac{C^2}{2}\operatorname{Var}\left[M_k\right] - C\eta\right\}.
\end{aligned}
\end{equation}
We choose the $C$ to minimize the upper bound above, which yields $C = \frac{\eta}{\operatorname{Var}\left[M_k\right]}$. Therefore, we have
\begin{equation}
	P\left( \underset{0\le i \le k}{\sup}M_i > \eta \right) \le \exp\left\{-\frac{\eta^2}{2\operatorname{Var}\left[M_k\right]}\right\}.
\end{equation}

Because the distribution of $\left\{ M_1, M_2, \ldots, M_k \right\}$ is symmetric, we get
\begin{equation}
\begin{aligned}
	& P\left( \underset{0\le i \le k}{\sup}\left|M_i\right| > \eta \right) \\
	= & P\left( \underset{0\le i \le k}{\sup}M_i > \eta \bigcup \underset{0\le i \le k}{\inf}M_i < - \eta \right)\\
	\le & P\left( \underset{0\le i \le k}{\sup}M_i > \eta\right) + P\left( \underset{0\le i \le k}{\inf}M_i < - \eta\right) \\
	= & 2P\left( \underset{0\le i \le k}{\sup}M_i > \eta\right).
\end{aligned}
\end{equation}
Hence, we have
\begin{equation*}
\begin{aligned}
	& P\left( \underset{0\le i \le k}{\sup}\left|M_i\right| > \eta \right) \le 2\exp\left\{-\frac{\eta^2}{2\operatorname{Var}\left[M_k\right]}\right\},
\end{aligned}
\end{equation*}
which completes the proof.

\fi
\fi

\bibliographystyle{IEEEtran}
\bibliography{IEEEabrv,reference}

\end{document}